\newtheorem{theorem}{Theorem}
\newtheorem{proposition}{Proposition}
\newtheorem{remark}{Remark}
\newcommand{\bfset}{\mathpzc{x}}
\newcommand{\full}{\mathtt{full}}
\newcommand{\cerr}[2]{\mathrm{Q}(\sqrt{2|\langle #1, #2 \rangle|^2P})}
\DeclareFontFamily{OT1}{pzc}{}
\DeclareFontShape{OT1}{pzc}{m}{it}{<-> s * [1.10] pzcmi7t}{}
\DeclareMathAlphabet{\mathpzc}{OT1}{pzc}{m}{it}
\newcounter{cc}
\renewcommand{\qed}{\hfill \mbox{\raggedright \rule{0.1in}{0.1in}}}
\newcounter{consticnt}\setcounter{consticnt}{\value{cc}}\addtocounter{cc}{1}
\date{}
\title{Variable-Length Channel Quantizers for Maximum Diversity and Array Gains\vspace{-4pt}}
\author{
\IEEEauthorblockN{Erdem Koyuncu and Hamid Jafarkhani}
\IEEEauthorblockA{Center for Pervasive Communications and Computing, University of California, Irvine.\vspace{-7pt}} 
}
\begin{document}
\maketitle
\begin{abstract}
We consider a $t \times 1$ multiple-antenna fading channel with quantized channel state information at the transmitter (CSIT). Our goal is to maximize the diversity and array gains that are associated with the symbol error rate (SER) performance of the system. It is well-known that for both beamforming and precoding strategies, finite-rate fixed-length quantizers (FLQs) cannot achieve the full-CSIT diversity and array gains. In this work, for any function $f(P)\in\omega(1)$, we construct variable-length quantizers (VLQs) that can achieve these full-CSIT gains with rates $1+(f(P) \log P)/P$ and $1+f(P)/P^t$ for the beamforming and precoding strategies, respectively, where $P$ is the power constraint of the transmitter. We also show that these rates are the best possible up to $o(1)$ multipliers in their $P$-dependent terms. In particular, although the full-CSIT SER is not achievable at any (even infinite) feedback rate, the full-CSIT diversity and array gains can be achieved with a feedback rate of $1$ bit per channel state asymptotically.
\end{abstract}

\section{Introduction}
The performance of a multiple antenna communication system can be greatly improved by making the channel state information (CSI) available at the transmitter and/or the receiver. Typically, the receiver can acquire the CSI by training. Obtaining CSI at the transmitter (CSIT) is however more difficult and generally requires receiver's feedback. In this context, a channel quantizer specifies (i) for each channel state, the sequence of feedback bits to be fed back by the receiver; and (ii) for each such sequence, the transmission codeword (e.g. a beamforming vector) to be employed by the transmitter. The goal is then to design an optimal quantizer with respect to a specific performance measure (such as the symbol error rate (SER)) subject to the rate constraint of the feedback link.

An overview on channel quantizers for multiple antenna systems can be found in \cite{love3}. In particular, limited feedback beamforming has been extensively studied via Grassmannian line packings\cite{love1, zhou2, raghavan1}, quantizer design algorithms\cite{narula1, roh1, roh2, lau1}, high resolution methods\cite{zheng1}, random vector quantizers\cite{yeung1, santipach2}, and several other techniques\cite{mukkavilli1, xia1, jafar1}. Beamforming is a special case of the more general, albeit more complex transmission strategy called precoding, whose performance with quantized feedback has been studied in \cite{jongren1, liu1, love4, love5, zhou1, ekbatani1}. 

Most of the previous work on CSI feedback employ fixed-length quantizers (FLQs), in which the number of feedback bits per channel state is a fixed integer. In general, different binary codewords of different lengths can be fed back for different channel states, resulting in a variable-length quantizer (VLQ). An FLQ is clearly a special case of a VLQ, and we thus expect VLQs to provide a better performance than FLQs. 

In this work, we focus on the VLQ design problem for a $t\times 1$ multiple-input single-output (MISO) system. Our goal is to maximize the diversity and array gains corresponding to the symbol error rate (SER) performance of the system (The reason why we will not focus on the SER itself will be explained later on.). We assume a quasi-static block fading channel model in which the channel realizations vary independently from one fading block to another while within each block they remain constant. We also assume that the receiver has full CSI, while the transmitter has only partial CSI provided by the receiver via error-free and delay-free feedback channels. The partial CSI is in the form of quantized instantaneous CSI provided by a VLQ. We design VLQs for beamforming and precoding strategies. For the latter strategy, we focus on linear precoding of complex orthogonal space-time block codes\cite{tjc}.

As we have shown previously, for a wide variety of communication problems, VLQs outperform FLQs by a significant margin\cite{dccfull, relayinterf, theotherdcc}. In \cite{dccfull}, we have considered the VLQ design problem for the outage probability performance measure. Compared to \cite{dccfull}, although the general idea behind designing good VLQs will remain the same, the specific methods of \cite{dccfull} are not directly applicable due to fundamentally different distortion functions. Also as a result of this difference, we obtain completely different results and reach different conclusions. 

In \cite{relayinterf, theotherdcc}, we have designed SER-optimizing distributed VLQs for beamforming in networks with multiple receivers. In this paper, the simpler non-distributed (point-to-point) nature of quantization allows us to prove much stronger achievability results. Here, we also prove converse results and study the more general precoding strategy. 

The rest of this paper is organized as follows: In Section \ref{secSystemModel}, we give a formal description of the system model. In Sections \ref{secBF} and \ref{secPC}, we state our main results for the beamforming and precoding strategies, respectively.

\emph{Notation:} For real-valued functions $f(x)$, $g(x)$, let $f(x)\in\omega(g(x))$ if (for all sufficiently large $x$) $f(x)\geq kg(x),\,\forall k>0$; $f(x)\in O(g(x))$ if $\exists a>0,\,f(x) \leq ag(x)$; $f(x) \in o(g(x))$ if $f(x) \leq \epsilon g(x),\,\forall \epsilon>0$; and $f(x)\sim g(x)$ if $\lim_{x\rightarrow\infty} \frac{f(x)}{g(x)} = 1$. $\mathbf{A}^T$, $\mathbf{A}^{\dagger}$ denote the transpose and the Hermitian transpose of matrix $\mathbf{A}$, respectively. $\mathbf{A}^* = (\mathbf{A}^T)^{\dagger}$. $\mathbf{h} \simeq \mathtt{CN}(\mathbf{K})$ means that $\mathbf{h}$ is a circulary-symmetric complex Gaussian random vector with covariance matrix $\mathbf{K}$. $\mathbf{I}$ is the identity matrix, and $\mathbf{0}$ is the all-zero matrix. $|\mathcal{A}|$ is the cardinality of $\mathcal{A}$. 

\section{Preliminaries}
\label{secSystemModel}
\subsection{System Model}
\label{secSystemModelSystemModel}
We consider a $t\times 1$ MISO system. Denote the channel from transmitter antenna $i$ to the receiver antenna by $h_i$, and let $\mathbf{h} = [\begin{array}{ccc} h_1 & \cdots & h_t \end{array}]^{T}\in\mathbb{C}^{t\times 1}$ represent the entire channel state. We assume that $\mathbf{h}\simeq\mathtt{CN}(\mathbf{I})$. 

We assume that the transmission symbol $s$ is a discrete random variable with a uniform distribution on the set $\{+1,-1\}$.\footnote{Our results can be extended to any finite constellation. We omit such a generalization here so as to highlight our quantization methods without dealing with the unnecessary technicalities of an arbitrary constellation.} For a fixed $\mathbf{h}$, we first consider the transmission of $s$ via a beamforming vector $\mathbf{x}\in\bfset$, where $\bfset = \{\mathbf{x}:\mathbf{x}\in\mathbb{C}^{t\times 1},\,\|\mathbf{x}\| = 1\}$ is the set of all feasible beamforming vectors (We shall discuss the precoded transmission later on.). The channel input-output relationship with such a tranmission strategy can be expressed as $y = s\langle \mathbf{x},\mathbf{h}\rangle \sqrt{P} + n$, where $y$ is the received signal, and the noise term $n\simeq\mathtt{CN}(1)$ is independent of $\mathbf{h}$. The corresponding signal-to-noise ratio (SNR) can be expressed as   $|\langle \mathbf{x},\mathbf{h} \rangle|^2 P$. Given $\mathbf{h}$, the conditional SER with a maximum-likelihood decoder is then $\mathrm{Q}(\sqrt{2 |\langle\mathbf{x},\mathbf{h}\rangle|^2P})$, where $\mathrm{Q}(x) = \frac{1}{\sqrt{2\pi}}\int_x^{\infty}e^{-\frac{u^2}{2}}\mathrm{d}\mathrm{x},\,x\in\mathbb{R}$ is the Gaussian tail function. 

When $\mathbf{h}$ is random, we can choose a different beamforming vector for different $\mathbf{h}$. In this case, we are interested in the SER averaged over all possible channel states. Formally, consider an arbitrary (measurable) mapping $\mathtt{m}:\mathbb{C}^t\rightarrow\bfset$. Then, the (average) SER with mapping $\mathtt{m}$ can be expressed as
\begin{align}
\label{serwithq}
\mathtt{SER}(\mathtt{m}) \triangleq \mathtt{E}[\mathrm{Q}(\sqrt{2 |\langle \mathtt{m}(\mathbf{h}),\mathbf{h} \rangle|^2 P})].
\end{align}
Let $\mathtt{d}(\mathtt{m}) = -\lim_{P\rightarrow\infty}[\mathtt{SER}(\mathtt{m})/\log P]$
as the diversity gain with $\mathtt{m}$, and $\mathtt{g}(\mathtt{m}) = \left[\lim_{P\rightarrow\infty}\left(\mathtt{SER}(\mathtt{m}) P^{\mathtt{d}(\mathtt{m})}\right)\right]^{-1}$
as the array gain with $\mathtt{m}$, provided that both limits exist. The asymptotic $P\rightarrow\infty$ performance of $\mathtt{m}$ is then $\mathtt{SER}(\mathtt{m}) \sim \left[\mathtt{g}(\mathtt{m}) P^{\mathtt{d}(\mathtt{m})}\right]^{-1}$. 

As an extreme case, the transmitter may know $\mathbf{h}$ perfectly, in which case we have a ``full-CSIT system.'' In such a scenario, we can choose an optimal beamforming vector, say $\full(\mathbf{h})$ for a given $\mathbf{h}$. We have $|\langle \full(\mathbf{h}),\mathbf{h} \rangle|  \leq \|\mathbf{h}\|$, and the upper bound is achievable by choosing $\full(\mathbf{h}) = \frac{\mathbf{h}}{\|\mathbf{h}\|}$. 
This gives us $\mathtt{SER}(\full) = \mathtt{E}[\mathrm{Q}(\sqrt{2 \|\mathbf{h} \|^2 P})]$ with $\mathtt{d}(\full) = t$. 

We now investigate the case where the transmitter has partial CSI via feedback from the receiver. Such a system can be modeled by a channel quantizer as we explain in the following.
\subsection{The channel quantizer}
\label{chquantdescription}
Let $\mathcal{I} \in \{\{0\},\{0,1\},\{0,1,2\},\ldots,\mathbb{N}\}$ be a possibly infinite index set. We use the notations $\{a_n\}_{\mathcal{I}}$ and $\{a_n:n\in\mathcal{I}\}$ interchangeably to represent a set whose elements are the real numbers $a_n,\,n\in\mathcal{I}$. A similar definition holds for sets of vectors, collection of sets, etc.  

Given $\mathcal{I}$, let $\{\mathbf{x}_n\}_{\mathcal{I}}$ be a set of quantized beamforming vectors with $\{\mathbf{x}_n\}_{{\mathcal{I}}}\subset\bfset$. Also, let $\{\mathcal{E}_n\}_{\mathcal{I}}$ with $\mathcal{E}_n\subset\mathbb{C}^t,\,\forall n\in\mathbb{N}$ be a collection of mutually disjoint measurable subsets of $\mathbb{C}^t$ with $\bigcup_{n\in\mathcal{I}} \mathcal{E}_n = \mathbb{C}^t$. Finally, let $\{\mathtt{b}_n\}_{\mathcal{I}}$ be a collection of feedback binary codewords with $\{\mathtt{b}_n\}_{\mathcal{I}} \subset \{0,1\}^{\star}$, where $\{0,1\}^{+} \triangleq \{\mathtt{0},\mathtt{1},\mathtt{00},\mathtt{01},\ldots\}$ is the set of all non-empty binary codewords. We assume that the code $\{\mathtt{b}_n\}_{\mathcal{I}}$ is prefix-free, which implies in particular that $\mathtt{b}_m \neq \mathtt{b}_n$ whenever $m \neq n$. We call the collection of triples $\mathtt{q}\triangleq \{\mathbf{x}_n,\,\mathcal{E}_n,\,\mathtt{b}_n\}_{\mathcal{I}}$ a quantizer $\mathtt{q}$ for the beamforming strategy. 

This definition immediately induces a feedback transmission scheme that operates in the following manner: For a fixed channel state $\mathbf{h}$, the receiver feeds back the binary codeword $\mathtt{b}_n$, where the index $n$ here satisfies $\mathbf{h}\in\mathcal{E}_n$. Such an index $n$ always exists and is unique as $\mathcal{E}_n,\,n\in\mathbb{N}$ is a disjoint covering of $\mathbb{C}^t$. The transmitter recovers the index $n$ and uses the corresponding beamforming vector $\mathbf{x}_n$. The recovery of $n$ by the transmitter is always possible since $\mathtt{b}_n$s are distinct. We write $\mathtt{q}(\mathbf{h}) = \mathbf{x}_n$ whenever $\mathbf{h}\in\mathcal{E}_n$ to emphasize the quantization operation. We call the set $\{\mathbf{x}_n\}_{\mathcal{I}}$ the quantizer (or beamforming) codebook. 

For any $\mathtt{b}\in\{0,1\}^{+}$, let $\mathtt{L}(\mathtt{b})$ denote the ``length'' of $\mathtt{b}$. For example, $\mathtt{L}(\mathtt{1}) = 1,\mathtt{L}(\mathtt{01}) = 2$. A quantizer $\mathtt{q}$ is called an FLQ if $\mathtt{L}(\mathtt{b}_m) = \mathtt{L}(\mathtt{b}_n),\,\forall m,n\in\mathcal{I}$. Otherwise, we call $\mathtt{q}$ a VLQ. In either case, the rate of $\mathtt{q}$ is $\mathtt{R}(\mathtt{q})\triangleq \sum_{n\in\mathcal{I}}\mathtt{P}(\mathbf{h}\in\mathcal{E}_n)\mathtt{L}(\mathtt{b}_n)$.

A quantizer $\mathtt{q}$ is thus a mapping $\mathbb{C}^t \rightarrow \{\mathbf{x}_n\}_{\mathcal{I}}$ supplied with a feedback binary codeword $\mathtt{b}_n$ for each $\mathbf{x}_n$. Treated solely as a mapping, it is a special case of the mapping $\mathtt{m}:\mathbb{C}^t\rightarrow\bfset$ discussed in Section \ref{secSystemModelSystemModel} with the requirement of a countable range $\{\mathbf{x}_n\}_{\mathcal{I}}$. We can therefore calculate the SER with $\mathtt{q}$ as $\mathtt{SER}(\mathtt{q}) = \mathtt{E}[\mathrm{Q}(\sqrt{2 |\langle \mathtt{q}(\mathbf{h}),\mathbf{h} \rangle|^2 P})]$. 

Our goal in this paper is to design low-rate VLQs that can achieve the full-CSIT diversity and array gains. Before we discuss our VLQ designs, let us reemphasize that we wish to achieve $\mathtt{d}(\full)$ and $\mathtt{g}(\full)$, not $\mathtt{SER}(\full)$ (although there is no practically-significant difference between the two goals at high $P$). The motivation behind this choice is not the hope that the diversity and array gains would be easier to work with. It is rather, as we shall prove in the following, the impossibility of achieving the full-CSIT SER with \textit{any} quantizer. Note that one may achieve $\mathtt{d}(\full)$ and $\mathtt{g}(\full)$ while not achieving $\mathtt{SER}(\full)$ at any $P$. For example, suppose that a hypothetical quantizer $\mathtt{q}'$ achieves $\mathtt{SER}(\mathtt{q}') = \mathtt{SER}(\full) + 1/P^{t+1}$. Obviously we have $\mathtt{SER}(\mathtt{q}') > \mathtt{SER}(\full),\,\forall P$, while $\mathtt{d}(\mathtt{q}') = \mathtt{d}(\full)$ and $\mathtt{g}(\mathtt{q}') = \mathtt{g}(\full)$.

\subsection{The impossibility of achieving $\mathtt{SER}(\full)$}
Let us first define the distortion function $d(\mathbf{x},\mathbf{h}) = \cerr{\mathbf{x}}{\mathbf{h}} - \mathrm{Q}(\sqrt{2\|\mathbf{h}\|^2P})$. For any given quantizer $\mathtt{q}$, we have $\mathtt{SER}(\mathtt{q}) = \mathtt{SER}(\full) + \mathtt{E}[d(\mathtt{q}(\mathbf{h}), \mathbf{h})]$, and therefore, minimizing the expected distortion with $\mathtt{q}$ is equivalent to minimizing the SER with $\mathtt{q}$. The following result is then merely the consequence of the countable nature of the quantizer structure and the properties of the distortion function $d(\mathbf{x},\mathbf{h})$ that is associated with the SER.
\begin{theorem}
\label{theorem1}
For\! any quantizer $\mathtt{q}$, $\mathtt{SER}(\mathtt{q}) \!\!>\! \mathtt{SER}(\full), \forall P\! >\! 0$. 
\end{theorem}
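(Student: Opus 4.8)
The plan is to upgrade the nonnegativity of the penalty term into strict positivity. Since the excerpt already gives $\mathtt{SER}(\mathtt{q}) = \mathtt{SER}(\full) + \mathtt{E}[d(\mathtt{q}(\mathbf{h}),\mathbf{h})]$, it suffices to show $\mathtt{E}[d(\mathtt{q}(\mathbf{h}),\mathbf{h})] > 0$ for every $P > 0$; and because we will see that $d(\mathbf{x},\mathbf{h}) \ge 0$ pointwise, this in turn reduces to showing that $d(\mathtt{q}(\mathbf{h}),\mathbf{h}) > 0$ on a set of channel states of positive probability.

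First I would do the pointwise analysis of the distortion. For any $\mathbf{x}\in\bfset$ and any $\mathbf{h}$, Cauchy--Schwarz gives $|\langle \mathbf{x},\mathbf{h}\rangle| \le \|\mathbf{x}\|\,\|\mathbf{h}\| = \|\mathbf{h}\|$, and since $z \mapsto \mathrm{Q}(\sqrt{2zP})$ is strictly decreasing in $z\ge 0$ for $P > 0$, this yields $d(\mathbf{x},\mathbf{h}) = \cerr{\mathbf{x}}{\mathbf{h}} - \mathrm{Q}(\sqrt{2\|\mathbf{h}\|^2P}) \ge 0$, with equality if and only if $|\langle \mathbf{x},\mathbf{h}\rangle| = \|\mathbf{h}\|$. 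Invoking the equality case of Cauchy--Schwarz, $d(\mathbf{x},\mathbf{h}) = 0$ holds exactly when $\mathbf{h}$ and $\mathbf{x}$ are linearly dependent, i.e. when $\mathbf{h} = \mathbf{0}$ or $\mathbf{h}$ lies on the complex line $\mathbb{C}\mathbf{x}\triangleq\{c\mathbf{x}:c\in\mathbb{C}\}$.

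Next I would exploit the countable structure of the quantizer, which is where the argument really bites. Writing $\mathtt{q}(\mathbf{h}) = \mathbf{x}_n$ for $\mathbf{h}\in\mathcal{E}_n$, the set of channel states on which $\mathtt{q}$ incurs zero distortion is contained in $\mathcal{Z}\triangleq\{\mathbf{0}\}\cup\bigcup_{n\in\mathcal{I}}\mathbb{C}\mathbf{x}_n$. Each $\mathbb{C}\mathbf{x}_n$ is a real $2$-dimensional linear subspace of $\mathbb{C}^t\cong\mathbb{R}^{2t}$; as $t\ge 2$ (for $t=1$ one has $|\langle\mathbf{x},\mathbf{h}\rangle|=\|\mathbf{h}\|$ identically and the two SERs simply coincide), we have $2 < 2t$, so each $\mathbb{C}\mathbf{x}_n$ — and hence the \emph{countable} union $\mathcal{Z}$ together with the single point $\mathbf{0}$ — has Lebesgue measure zero in $\mathbb{R}^{2t}$. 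Because $\mathbf{h}\simeq\mathtt{CN}(\mathbf{I})$ is absolutely continuous with respect to Lebesgue measure, $\mathtt{P}(\mathbf{h}\in\mathcal{Z}) = 0$, and therefore $d(\mathtt{q}(\mathbf{h}),\mathbf{h}) > 0$ almost surely.

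Finally, a nonnegative random variable that is almost surely strictly positive has strictly positive expectation (for some integer $k$, $\mathtt{E}[d(\mathtt{q}(\mathbf{h}),\mathbf{h})] \ge k^{-1}\mathtt{P}(d(\mathtt{q}(\mathbf{h}),\mathbf{h}) > k^{-1}) > 0$), so $\mathtt{E}[d(\mathtt{q}(\mathbf{h}),\mathbf{h})] > 0$ and thus $\mathtt{SER}(\mathtt{q}) > \mathtt{SER}(\full)$ for all $P > 0$. No step is technically deep; the only point demanding care, and the conceptual core of the result, is the measure-zero claim: it is precisely the countability of the codebook $\{\mathbf{x}_n\}_{\mathcal{I}}$ — forced by the prefix-free feedback code — that prevents any quantizer from matching the continuum of optimal directions $\mathbf{h}/\|\mathbf{h}\|$ on a set of positive probability.
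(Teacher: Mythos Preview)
Your proof is correct and follows essentially the same route as the paper's: both hinge on the fact that for any fixed unit vector $\mathbf{x}$, the set $\{\mathbf{h}:|\langle\mathbf{x},\mathbf{h}\rangle|^2=\|\mathbf{h}\|^2\}$ has probability zero, so the nonnegative distortion is strictly positive almost everywhere. The only cosmetic difference is that the paper singles out one cell $\mathcal{E}_i$ of positive probability and bounds the integral over that cell, whereas you take the countable union $\bigcup_n\mathbb{C}\mathbf{x}_n$ and argue the full expectation is positive; your explicit handling of the $t=1$ degeneracy is a nice addition the paper leaves implicit.
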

\begin{proof}
Let $\mathtt{q} = \{\mathbf{x}_n,\mathcal{E}_n,\mathtt{b}_n\}_{\mathcal{I}}$. Since $\sum_{n\in\mathcal{I}}\mathtt{P}(\mathbf{h}\in\mathcal{E}_n) = 1$, $\exists i\in\mathcal{I}$ such that $\mathtt{P}(\mathbf{h}\in\mathcal{E}_i)>0$. We then have
\begin{align}
\nonumber \textstyle \mathtt{SER}(\mathtt{q}) & \nonumber \textstyle = \mathtt{SER}(\full)  + \sum_{n\in\mathbb{N}} \int_{\mathcal{E}_n} d(\mathbf{x}_n,\mathbf{h})f(\mathbf{h})\mathrm{d}\mathbf{h} \\
& \geq \nonumber\textstyle \mathtt{SER}(\full) + \int_{\mathcal{E}_i}d(\mathbf{x}_i,\mathbf{h})f(\mathbf{h})\mathrm{d}\mathbf{h}.
\end{align}
For any $P > 0$, the distortion function $d(\mathbf{x}_i,\mathbf{h})$ is positive almost everywhere (it is non-negative everywhere and it is zero only when $\mathbf{h}$ is a member of the set $\{\mathbf{h}:|\langle \mathbf{x}_i,\mathbf{h}\rangle|^2 = \|\mathbf{h}\|^2\}$, which has probability measure zero). The integral of an almost-everywhere-positive function on a set of positive measure is positive. Hence, $\mathtt{SER}(\mathtt{q}) > \mathtt{SER}(\full),\forall P>0$.
\end{proof}
\begin{remark}[Comparison with outage probability\cite{dccfull}]
\normalfont For the same $t\times 1$ MISO system considered in this paper, and for a given target data transmission rate, let $\mathtt{OUT}(\mathtt{q})$ and $\mathtt{OUT}(\full)$ denote the outage probability with a (beamforming) quantizer $\mathtt{q}$ and the full-CSIT outage probability, respectively. Then, similarly, there is a distortion function $D(\mathbf{x},\mathbf{h})$ that satisfies $\mathtt{OUT}(\mathtt{q}) = \mathtt{OUT}(\full) + \mathtt{E}[D(\mathbf{x},\mathbf{h})]$. However, unlike the distortion function $d(\mathbf{x},\mathbf{h})$ for the SER, for any $\mathbf{x}\in\bfset$, the function $D(\mathbf{x},\mathbf{h})$ is zero on a set of positive measure\cite{dccfull}. Hence, the natural analogue of Theorem \ref{theorem1}, i.e. the claim that $\mathtt{OUT}(\mathtt{q})>\mathtt{OUT}(\full)$ for any quantizer $\mathtt{q}$, may not hold in the case of outage probability. In fact, it is possible to design finite-rate VLQs that can achieve $\mathtt{OUT}(\full)$ at any $P$ \cite{dccfull}. Hence, at least in the context of limited feedback, the SER and outage probability performance measures exhibit fundamentally different behaviors.\qed
\end{remark}

According to Theorem \ref{theorem1}, we have no hope in achieving $\mathtt{SER}(\full)$. The next most important question is how to design quantizers that can achieve $\mathtt{d}(\full)$ and $\mathtt{g}(\full)$ if possible. In this context, it is well-known that finite-rate FLQs cannot achieve these full-CSIT diversity and array gains \cite{roh1}. In this paper, we design VLQs that can achieve these gains with a feedback rate of $1$ bit per channel state asymptotically as $P\rightarrow\infty$. This is a significant improvement over FLQs that require infinite rate to achieve the same performance. We first discuss how to design such quantizers for the beamforming strategy; the precoding case will be discussed afterwards.

\section{VLQs for beamforming}
\label{secBF}
We start with the design of the quantizer encoding regions for a given beamforming codebook. We first recall the standard encoding rule for FLQs and discuss why it will not work in the case of VLQs. We then modify the standard encoder to come up with a new encoder that will allow us to design good VLQs.
\subsection{Encoding}
\label{secBFEncoding}
Let $\mathcal{B} = \{\mathbf{x}_n\}_{\mathcal{I}}$ be a finite-cardinality beamforming codebook. A standard practice (see e.g. \cite{mukkavilli1, roh1}) is to work with the quantizer 
$\mathtt{q}_{\mathcal{B}}(\mathbf{h}) \triangleq \arg\max_{\mathbf{x}\in\mathcal{B}} |\langle \mathbf{x}, \mathbf{h}\rangle|$, 
which chooses the beamforming vector (with ties broken arbitrarily) in $\mathcal{B}$ that is ``closest'' to $\mathbf{h}$. One way to design a VLQ might be to keep this standard encoding rule for FLQs but use a variable-length code instead of a fixed-length code. On the other hand, the standard encoding rule results in quantization cells with roughly equal probability $1/|\mathcal{B}|$ when $|\mathcal{B}|$ is large. In such a scenario, the rate of an optimal variable-length code will roughly be the same as the rate $\lceil \log_2\! |\mathcal{B}| \rceil$ of the fixed-length code\cite{dccfull}. We thus consider an alternate encoding strategy that can benefit from variable-length codes. 

The standard encoder always picks the best beamforming vector in $\mathcal{B} = \{\mathbf{x}_n\}_{\mathcal{I}}$ that is closest to $\mathbf{h}$. We do not have to be this precise if our goal is to achieve the diversity and array gains provided by $\mathcal{B}$. For example, we do not need to distinguish between two beamforming vectors given that both provide an SER of at most $o(1/P^t)$; preferring one vector over the other will not affect the diversity and array gains of the system as the best possible decay of the SER is $O(1/P^t)$.  

With this observation, for a given beamforming codebook $\mathcal{B}$, we consider a variable-length quantizer $\mathtt{q}_{\mathcal{B}}^{\mathrm{v}}$ that operates as follows. Let $\beta = (t+1)\log P$. 
\begin{itemize}
\item If $|\langle \mathbf{x}_i,\mathbf{h} \rangle|^2P \geq \beta,\,\forall i\in\mathcal{I}$, then $\mathtt{q}_{\mathcal{B}}^{\mathrm{v}}$ feeds back the binary codeword $\mathtt{0}$, and we set $\mathtt{q}_{\mathcal{B}}^{\mathrm{v}}(\mathbf{h}) = \mathbf{x}_0$.
\item Otherwise, if $\exists i\in\mathcal{I}$ with $|\langle \mathbf{x}_i,\mathbf{h} \rangle|^2P  < \beta$, then $\mathtt{q}_{\mathcal{B}}^{\mathrm{v}}$ feeds back the concatenation of the binary codeword $\mathtt{1}$ and the binary codeword of length $\lceil \log_2 |\mathcal{B}| \rceil$ bits that represents the index, say $j\in\mathcal{I}$, of the beamforming vector that results in the maximum SNR. We set  $\mathtt{q}_{\mathcal{B}}^{\mathrm{v}}(\mathbf{h}) =  \mathtt{q}_{\mathcal{B}}(\mathbf{h})$.
\end{itemize}

Let us now analyze the performance of $\mathtt{q}_{\mathcal{B}}^{\mathrm{v}}$.
\begin{proposition}
\label{firdtprop}
For any finite-cardinality beamforming codebook $\mathcal{B}$, we have 
\begin{align}
\label{serboundforqv}
\mathtt{SER}(\mathtt{q}_{\mathcal{B}}^{\mathrm{v}}) & \leq \mathtt{SER}(\mathtt{q}_{\mathcal{B}}) + \frac{1}{P^{t+1}}, \\
\label{rtboundforqv}
\mathtt{R}(\mathtt{q}_{\mathcal{B}}^{\mathrm{v}}) & \leq 1+\frac{(t+1)|\mathcal{B}|\log_2(4 |\mathcal{B}|) \log P}{P}.
\end{align}
\end{proposition}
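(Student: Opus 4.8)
The plan is to split $\mathbb{C}^t$ according to the two cases in the definition of $\mathtt{q}_{\mathcal{B}}^{\mathrm{v}}$: let $\mathcal{G}\triangleq\{\mathbf{h}:|\langle\mathbf{x}_i,\mathbf{h}\rangle|^2P\geq\beta,\,\forall i\in\mathcal{I}\}$ be the ``good set'' (where $\mathtt{q}_{\mathcal{B}}^{\mathrm{v}}$ feeds back $\mathtt{0}$ and uses $\mathbf{x}_0$) and let $\mathcal{G}^c$ be its complement (where $\mathtt{q}_{\mathcal{B}}^{\mathrm{v}}$ coincides with $\mathtt{q}_{\mathcal{B}}$). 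The two structural facts to exploit are: on $\mathcal{G}^c$ one has $\mathtt{q}_{\mathcal{B}}^{\mathrm{v}}(\mathbf{h})=\mathtt{q}_{\mathcal{B}}(\mathbf{h})$, hence identical conditional SER; and on $\mathcal{G}$ the reproduction $\mathbf{x}_0$ --- though possibly not the closest codeword --- still satisfies $|\langle\mathbf{x}_0,\mathbf{h}\rangle|^2P\geq\beta$ by taking $i=0$ in the definition of $\mathcal{G}$.

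To establish \eqref{serboundforqv}, I would write $\mathtt{SER}(\mathtt{q}_{\mathcal{B}}^{\mathrm{v}})-\mathtt{SER}(\mathtt{q}_{\mathcal{B}})$ as a single integral over $\mathcal{G}$ (the $\mathcal{G}^c$ contributions cancel). Discarding the nonnegative term coming from $\mathtt{q}_{\mathcal{B}}$, and using that $\mathrm{Q}$ is decreasing together with $|\langle\mathbf{x}_0,\mathbf{h}\rangle|^2P\geq\beta$ on $\mathcal{G}$, the integrand is at most $\mathrm{Q}(\sqrt{2\beta})$. Then the Chernoff-type bound $\mathrm{Q}(x)\leq e^{-x^2/2}$ gives $\mathrm{Q}(\sqrt{2\beta})\leq e^{-\beta}=P^{-(t+1)}$ since $\beta=(t+1)\log P$; integrating against $f(\mathbf{h})$ over $\mathcal{G}$ and using $\mathtt{P}(\mathbf{h}\in\mathcal{G})\leq 1$ yields the additive $1/P^{t+1}$.

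For \eqref{rtboundforqv}, observe that $\mathcal{G}$ is a single quantization cell carrying the length-$1$ codeword $\mathtt{0}$, while $\mathcal{G}^c$ is partitioned into cells (one per value of $\mathtt{q}_{\mathcal{B}}(\mathbf{h})$) each carrying a codeword of length $1+\lceil\log_2|\mathcal{B}|\rceil$. Summing $\mathtt{P}(\mathcal{E}_n)\mathtt{L}(\mathtt{b}_n)$ therefore gives $\mathtt{R}(\mathtt{q}_{\mathcal{B}}^{\mathrm{v}})=\mathtt{P}(\mathbf{h}\in\mathcal{G})+\mathtt{P}(\mathbf{h}\in\mathcal{G}^c)\big(1+\lceil\log_2|\mathcal{B}|\rceil\big)=1+\mathtt{P}(\mathbf{h}\in\mathcal{G}^c)\lceil\log_2|\mathcal{B}|\rceil$, so everything reduces to bounding $\mathtt{P}(\mathbf{h}\in\mathcal{G}^c)$. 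A union bound gives $\mathtt{P}(\mathbf{h}\in\mathcal{G}^c)\leq\sum_{i\in\mathcal{I}}\mathtt{P}(|\langle\mathbf{x}_i,\mathbf{h}\rangle|^2<\beta/P)$; since $\|\mathbf{x}_i\|=1$ and $\mathbf{h}\simeq\mathtt{CN}(\mathbf{I})$, the scalar $\langle\mathbf{x}_i,\mathbf{h}\rangle\simeq\mathtt{CN}(1)$, so $|\langle\mathbf{x}_i,\mathbf{h}\rangle|^2$ is unit-mean exponential and $\mathtt{P}(|\langle\mathbf{x}_i,\mathbf{h}\rangle|^2<\beta/P)=1-e^{-\beta/P}\leq\beta/P$. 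Hence $\mathtt{P}(\mathbf{h}\in\mathcal{G}^c)\leq|\mathcal{B}|\beta/P=(t+1)|\mathcal{B}|(\log P)/P$, and combining with $\lceil\log_2|\mathcal{B}|\rceil\leq\log_2|\mathcal{B}|+1\leq\log_2(4|\mathcal{B}|)$ produces \eqref{rtboundforqv}.

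I do not expect any step to be a genuine obstacle --- the argument is essentially bookkeeping once the good/bad split is made. The only two points that require care are (i) recognizing that on $\mathcal{G}$ the possibly suboptimal choice $\mathbf{x}_0$ is still ``good enough,'' which works precisely because $\mathcal{G}$ is defined by a threshold imposed simultaneously on all codewords of $\mathcal{B}$; and (ii) invoking the fact that $|\langle\mathbf{x}_i,\mathbf{h}\rangle|^2$ is unit-mean exponential, which makes the small-ball probability linear in $\beta/P$ and hence keeps the rate overhead at only $O((\log P)/P)$.
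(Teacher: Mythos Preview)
Your proposal is correct and follows essentially the same argument as the paper's proof: the same good/bad split (the paper calls your $\mathcal{G}^c$ the set $\mathcal{E}$), the same use of $\mathrm{Q}(\sqrt{2\beta})\leq e^{-\beta}=P^{-(t+1)}$ on the good set, and the same union bound together with the exponential distribution of $|\langle\mathbf{x}_i,\mathbf{h}\rangle|^2$ to control $\mathtt{P}(\mathcal{G}^c)$. Your rate computation is in fact slightly cleaner, since you use the exact identity $\mathtt{P}(\mathcal{G})+\mathtt{P}(\mathcal{G}^c)=1$ rather than the paper's looser bookkeeping step.
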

\begin{proof}
We first prove the upper bound on $\mathtt{SER}(\mathtt{q}_{\mathcal{B}}^{\mathrm{v}})$. Let $d(\mathbf{h}) = \mathrm{Q}(\sqrt{2|\langle \mathtt{q}_{\mathcal{B}}^{\mathrm{v}}(\mathbf{h}), \mathbf{h}\rangle|^2P}) - \mathrm{Q}(\sqrt{2|\langle \mathtt{q}_{\mathcal{B}}(\mathbf{h}), \mathbf{h}\rangle|^2P})$, $\mathcal{E} = \{\mathbf{h}:\exists i\in\mathcal{I},\,|\langle\mathbf{x}_i,\mathbf{h}\rangle|^2P < \beta\}$, and $\mathcal{E}^c = \mathbb{C}^t - \mathcal{E}$. By the definition of $\mathtt{q}_{\mathcal{B}}^{\mathrm{v}}$, we have $\mathtt{E}[d(\mathbf{h}) | \mathbf{h}\in\mathcal{E}] = 0$.  Also, $\mathtt{E}[d(\mathbf{h}) | \mathbf{h}\in\mathcal{E}^c] \leq \mathtt{E}[\mathrm{Q}(\sqrt{2|\langle \mathtt{q}_{\mathcal{B}}^{\mathrm{v}}(\mathbf{h}), \mathbf{h}\rangle|^2P})] \leq \mathtt{E}[\mathrm{Q}(\sqrt{2(t+1)\log P})] \leq P^{-(t+1)}$, where the second inequality follows since $|\langle \mathtt{q}_{\mathcal{B}}^{\mathrm{v}}(\mathbf{h}), \mathbf{h}\rangle|^2P \geq (t+1)\log P,\,\forall\mathbf{h}\in\mathcal{E}^c$, and the last inequality follows from the bound $\mathrm{Q}(x) \leq e^{-x^2},\,x\geq 0$. Combining the two conditional expectations of $d(\mathbf{h})$, we obtain $\mathtt{E}[d(\mathbf{h})] \leq P^{-(t+1)}$. Substituting this to the obvious identity 
$\mathtt{SER}(\mathtt{q}_{\mathcal{B}}^{\mathrm{v}}) = \mathtt{SER}(\mathtt{q}_{\mathcal{B}}) + \mathtt{E}[d(\mathbf{h})]$, we obtain (\ref{serboundforqv}).

We now prove (\ref{rtboundforqv}). We have 
\begin{align}
\mathtt{R}(\mathtt{q}_{\mathcal{B}}^{\mathrm{v}}) & \textstyle = \mathtt{P}(\mathbf{h}\in\mathcal{E}^c) + (1+\lceil \log_2 |\mathcal{B}| \rceil) \mathtt{P}(\mathbf{h}\in\mathcal{E}) \\ & \textstyle \leq 1 +  \log_2 (4|\mathcal{B}|) \mathtt{P}(\mathbf{h}\in\mathcal{E}) \\ & \textstyle \leq 1+ \log_2 (4|\mathcal{B}|) \sum_{i\in\mathcal{I}} \mathtt{P}(|\langle \mathbf{x}_i,\mathbf{h}\rangle|^2 P \leq \beta) \\  & \textstyle = 1+ \log_2 (4|\mathcal{B}|) |\mathcal{B}| [1 - \exp(-\frac{\beta}{P})] \\ \textstyle & \textstyle \leq  1 + \log_2 (4|\mathcal{B}|) |\mathcal{B}| \frac{\beta}{P}, 
\end{align}
where the first inequality follows since $\mathtt{P}(\mathbf{h}\in\mathcal{E}^c) \leq 1$ and $1+\lceil \log_2 |\mathcal{B}| \rceil \leq 1+\log_2 |\mathcal{B}|+1 = \log_2(4|\mathcal{B}|)$. The second inequality follows from a union bound. This proves (\ref{rtboundforqv}). 
\end{proof}

Note that the maximum diversity gain with any quantizer is $t$. Hence, $\mathtt{SER}(\mathtt{q}_{\mathcal{B}}) \simeq g(\mathtt{q}_{\mathcal{B}})P^{-d}$ for some $d \leq t$. Since the second term in the upper bound in (\ref{serboundforqv}) decays faster than $\frac{1}{P^t}$, the diversity and array gains with $\mathtt{q}_{\mathcal{B}}^{\mathtt{v}}$ is the same with those of $\mathtt{q}_{\mathcal{B}}$. Moreover, according to (\ref{rtboundforqv}), we have $\mathtt{R}(\mathtt{q}_{\mathcal{B}}^{\mathtt{v}}) \rightarrow 1$ as $P\rightarrow\infty$. This is a significant improvement over a rate-$\lceil \log_2 |\mathcal{B}| \rceil$ FLQ for codebook $\mathcal{B}$, especially when $|\mathcal{B}|$ is large.

We now claim that, for any arbitrary function $f(P)\in\omega(1)$, there is a VLQ that can achieve $\mathtt{d}(\full)$ and $\mathtt{g}(\full)$ with rate $1+f(P)\frac{\log P}{P}$. We provide here an outline of the strategy to prove this result.  Motivated by (\ref{rtboundforqv}), we consider $P$-dependent codebooks $\mathcal{B}_P$ that satisfy $1+(t+1)|\mathcal{B}_P|\lceil \log_2 |\mathcal{B}_P| \rceil = f(P)$. In such a scenario, $|\mathcal{B}_P|\in\omega(1)$, or equivalently, we use codebooks with larger and larger cardinality as $P\rightarrow\infty$. If we can design the codebooks $\mathcal{B}_P,\,P>0$ well enough, the quantizer $\mathtt{q}_{\mathcal{B}_P}^{\mathtt{v}}$ will then achieve $\mathtt{d}(\full)$ and $\mathtt{g}(\full)$ with rate $1+f(P)\frac{\log P}{P}$, as claimed. Obviously, for this strategy to work, we need ``good'' codebook designs. We show the existence of such codebooks in the following.

\subsection{Existence of good codebooks}
The following result is a restatement of \cite[Proposition 2]{dccfull}.
\begin{proposition}
\label{zzprop}
For every sufficiently small $\delta > 0$, there is a codebook $\mathcal{B}_{\delta}$ with 
\begin{align}
\label{bdeltacardbound}
|\mathcal{B}_{\delta}| \leq \mathtt{C}_0\delta^{-2t}, 
\end{align}
and 
\begin{align}
\label{bdeltacovering}
\forall\overline{\mathbf{h}}\in\bfset,\,\exists \mathbf{x}\in\mathcal{B}_{\delta},\,|\langle\mathbf{x},\overline{\mathbf{h}}\rangle|^2 \geq 1-\delta, 
\end{align}
where $\mathtt{C}_0$ is a $\delta$-independent constant.
\end{proposition}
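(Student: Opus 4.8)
The plan is to construct $\mathcal{B}_\delta$ as a covering of the complex unit sphere $\bfset$ in the chordal (or equivalently, the $|\langle\cdot,\cdot\rangle|$-induced) metric, and to control its cardinality by a standard volume/packing argument. Concretely, observe that the condition (\ref{bdeltacovering}) asks that every $\overline{\mathbf{h}}\in\bfset$ lie within ``distance'' $\sqrt{\delta}$ of some codeword, where distance is measured via $1-|\langle\mathbf{x},\overline{\mathbf{h}}\rangle|^2$. So I would take $\mathcal{B}_\delta$ to be a maximal set of points in $\bfset$ that are pairwise separated by more than this threshold; maximality then forces (\ref{bdeltacovering}), since any point violating the covering condition could be added to the set.

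The cardinality bound (\ref{bdeltacardbound}) comes from the packing interpretation of a maximal separated set. The key steps are: (i) lift the problem from the projective space (where $|\langle\cdot,\cdot\rangle|$ lives) to the real sphere $S^{2t-1}\subset\mathbb{R}^{2t}$, noting that $\bfset$ modulo phase is a $(2t-1)$-real-dimensional manifold of finite volume; (ii) show that a separation of more than $\sqrt{\delta}$ in the chordal metric translates to a Euclidean ball of radius $\asymp\sqrt{\delta}$ around each codeword, and that these balls (or half-balls, accounting for the phase quotient) are disjoint; (iii) since each such ball has volume $\asymp\delta^{t-1/2}\cdot\delta^{1/2}=\ldots$ — more carefully, a $(2t-1)$-dimensional ball of radius $\sqrt\delta$ has volume $\asymp(\sqrt\delta)^{2t-1}=\delta^{t-1/2}$ — wait, this would give $|\mathcal{B}_\delta|\asymp\delta^{-(t-1/2)}$, not $\delta^{-2t}$. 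The bound in the proposition is not tight, which is fine: $\delta^{-(t-1/2)}\leq\delta^{-2t}$ for small $\delta$, so a crude argument suffices and we need not optimize the exponent. Packing the balls into the ambient $\bfset$ of bounded volume yields $|\mathcal{B}_\delta|\cdot\mathtt{C}_1\delta^{t-1/2}\leq\mathtt{C}_2$, hence $|\mathcal{B}_\delta|\leq\mathtt{C}_0\delta^{-2t}$ with room to spare.

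The main obstacle I anticipate is bookkeeping the geometry of the quotient $\bfset/\!\sim$ (complex lines, i.e. $\mathbb{CP}^{t-1}$) rather than making a genuinely hard estimate: one must verify that the ``$|\langle\mathbf{x},\overline{\mathbf{h}}\rangle|^2\geq 1-\delta$'' sublevel sets really do contain and are contained in Euclidean balls of comparable radius (a local smoothness/Taylor-expansion argument around each codeword), and that the phase ambiguity only changes the volume count by a bounded factor. Since the excerpt explicitly says this proposition is a restatement of \cite[Proposition 2]{dccfull}, I would in fact just cite that reference and give the volumetric sketch above as intuition; the constant $\mathtt{C}_0$ is whatever the packing argument produces and depends only on $t$, not on $\delta$. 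A clean alternative avoiding manifold geometry is to cover $\bfset$ by an $\epsilon$-net in the ordinary Euclidean metric on $\mathbb{C}^t$ restricted to the sphere — which has size $\asymp\epsilon^{-(2t-1)}$ — and check that Euclidean closeness $\|\mathbf{x}-\overline{\mathbf{h}}\|\leq\epsilon$ (after an optimal phase rotation) implies $|\langle\mathbf{x},\overline{\mathbf{h}}\rangle|^2\geq 1-\epsilon^2\geq 1-\delta$ upon setting $\epsilon=\sqrt\delta$; this again gives $|\mathcal{B}_\delta|\leq\mathtt{C}_0\delta^{-(t-1/2)}\leq\mathtt{C}_0\delta^{-2t}$, and is the route I would write up if a self-contained proof were required.
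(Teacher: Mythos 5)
Your sketch is correct, but note that the paper itself offers no proof of this proposition---it is explicitly a ``restatement of [Proposition 2]'' of the cited reference---so the self-contained argument is the genuine content you are supplying. Of the two routes you offer, the Euclidean $\epsilon$-net version at the end is the cleaner one, and it needs no ``optimal phase rotation'': if $\mathbf{x}$ and $\overline{\mathbf{h}}$ are unit vectors with $\|\mathbf{x}-\overline{\mathbf{h}}\|\le\epsilon$, then $2-2\,\mathrm{Re}\langle\mathbf{x},\overline{\mathbf{h}}\rangle\le\epsilon^2$, hence $|\langle\mathbf{x},\overline{\mathbf{h}}\rangle|\ge\mathrm{Re}\langle\mathbf{x},\overline{\mathbf{h}}\rangle\ge 1-\epsilon^2/2$ and $|\langle\mathbf{x},\overline{\mathbf{h}}\rangle|^2\ge 1-\epsilon^2$, so $\epsilon=\sqrt\delta$ finishes; an $\epsilon$-net on $S^{2t-1}\subset\mathbb{R}^{2t}$ of size $O(\epsilon^{-2t})=O(\delta^{-t})$ suffices, comfortably inside the stated $\delta^{-2t}$. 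Your observation that the exponent $2t$ in (\ref{bdeltacardbound}) is loose is correct---the intrinsic argument on $\mathbb{CP}^{t-1}$, where $\sqrt{1-|\langle\cdot,\cdot\rangle|^2}$ is an actual metric, gives exponent $t-1$---and the proposition only needs the crude bound. The one place to be careful in your lead argument (maximal separated set) is the point you flag yourself: on $\bfset$ the quantity $\sqrt{1-|\langle\cdot,\cdot\rangle|^2}$ is only a pseudo-metric, vanishing along phase orbits, so its ``balls'' are tubular neighborhoods of circles rather than Euclidean balls; the volume count should be carried out on $\mathbb{CP}^{t-1}$ or replaced by the Euclidean-net argument, which sidesteps the quotient bookkeeping entirely.
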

\begin{remark}
{\normalfont This existence result is good enough for our purposes. Still, let us note that an explicit construction for $\mathcal{B}_{\delta}$ is also available for implementation purposes. We refer the interested reader to \cite[Section III.B]{dccfull}.}\qed
\end{remark}

In the following proposition, we calculate the SER with $\mathcal{B}_{\delta}$.
\begin{proposition}
\label{relselprop}
For every sufficiently small $\delta > 0$, we have 
\begin{align}
\label{serqbdeltaubound}
\mathtt{SER}(\mathtt{q}_{\mathcal{B}_{\delta}}) \leq \mathtt{SER}(\full)(1 + 2t\delta). 
\end{align}
\end{proposition}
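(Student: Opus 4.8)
The plan is to bound $\mathtt{SER}(\mathtt{q}_{\mathcal{B}_{\delta}})$ by the full-CSIT SER evaluated at the reduced power $(1-\delta)P$, and then to show that this reduction costs only a factor $1+2t\delta$. First I would invoke the defining property of $\mathtt{q}_{\mathcal{B}_{\delta}}$, namely that it selects the element of $\mathcal{B}_{\delta}$ maximizing $|\langle\mathbf{x},\mathbf{h}\rangle|$. Applying the covering guarantee~(\ref{bdeltacovering}) to the unit vector $\mathbf{h}/\|\mathbf{h}\|$ shows that some $\mathbf{x}\in\mathcal{B}_{\delta}$ satisfies $|\langle\mathbf{x},\mathbf{h}\rangle|^2\geq(1-\delta)\|\mathbf{h}\|^2$, and hence $|\langle\mathtt{q}_{\mathcal{B}_{\delta}}(\mathbf{h}),\mathbf{h}\rangle|^2\geq(1-\delta)\|\mathbf{h}\|^2$ for every nonzero $\mathbf{h}$. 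Since $\mathrm{Q}$ is strictly decreasing, applying this inside the expectation defining $\mathtt{SER}$ gives $\mathtt{SER}(\mathtt{q}_{\mathcal{B}_{\delta}})\leq\mathtt{E}[\mathrm{Q}(\sqrt{2(1-\delta)\|\mathbf{h}\|^2P})]$.

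The next step is the scaling estimate $\mathtt{E}[\mathrm{Q}(\sqrt{2(1-\delta)\|\mathbf{h}\|^2P})]\leq(1-\delta)^{-t}\,\mathtt{SER}(\full)$. For this I would write the expectation as an integral against the density of $\gamma\triangleq\|\mathbf{h}\|^2$, which is proportional to $\gamma^{t-1}e^{-\gamma}$ on $[0,\infty)$ since $\gamma$ is a sum of $t$ i.i.d.\ unit-mean exponentials. Substituting $u=(1-\delta)\gamma$ extracts a factor $(1-\delta)^{-t}$ from the $\gamma^{t-1}\,\mathrm{d}\gamma$ part and turns $e^{-\gamma}$ into $e^{-u/(1-\delta)}$; bounding $e^{-u/(1-\delta)}\leq e^{-u}$ for $u\geq0$ (valid because $0<\delta<1$) restores exactly the density of $\gamma$, leaving $(1-\delta)^{-t}$ times $\mathtt{SER}(\full)=\mathtt{E}[\mathrm{Q}(\sqrt{2\|\mathbf{h}\|^2P})]$.

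Finally I would check that $(1-\delta)^{-t}\leq1+2t\delta$ for all sufficiently small $\delta>0$: the difference $\phi(\delta)=1+2t\delta-(1-\delta)^{-t}$ satisfies $\phi(0)=0$ and $\phi'(0)=2t-t=t>0$, so $\phi>0$ on a right neighborhood of the origin; "sufficiently small" here means $\delta$ below this $t$-dependent threshold (and below the threshold from Proposition~\ref{zzprop} guaranteeing that $\mathcal{B}_{\delta}$ exists). Chaining the three inequalities yields~(\ref{serqbdeltaubound}). I expect the scaling estimate to be the only step requiring any care; the rest is monotonicity of $\mathrm{Q}$ and a one-line Taylor argument, and everything is uniform in $P>0$.
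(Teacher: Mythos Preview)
Your proposal is correct and follows essentially the same argument as the paper: both use the covering property of $\mathcal{B}_{\delta}$ to bound the SNR by $(1-\delta)\|\mathbf{h}\|^2P$, write the resulting expectation as an integral against the gamma density of $\|\mathbf{h}\|^2$, substitute $u=(1-\delta)\gamma$ to extract $(1-\delta)^{-t}$, bound $e^{-u/(1-\delta)}\leq e^{-u}$ to recover $\mathtt{SER}(\full)$, and finish with a Taylor/derivative argument for $(1-\delta)^{-t}\leq 1+2t\delta$. Your write-up is slightly more explicit about the density and the last inequality, but there is no substantive difference.
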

\begin{proof}
According to Proposition (\ref{zzprop}), the SNR provided by $\mathcal{B}_{\delta}$ is at least $\|\mathbf{h}\|^2(1-\delta)P$ for any given channel state $\mathbf{h}$. As a result, $\mathtt{SER}(\mathtt{q}_{\mathcal{B}_{\delta}}) \leq \int_0^{\infty} \mathrm{Q}(\sqrt{2x(1-\delta)P})x^{t-1}e^{-x}/\Gamma(t)\mathrm{d}x$, where the integration variable $x$ corresponds to a realization of the random variable $\|\mathbf{h}\|^2$. With a change of variables $u = x(1-\delta)$, we obtain  $\mathtt{SER}(\mathtt{q}_{\mathcal{B}_{\delta}}) \leq (1-\delta)^{-t}\int_0^{\infty} \mathrm{Q}(\sqrt{2uP})u^{t-1}e^{-u/(1-\delta)}/\Gamma(t)\mathrm{d}u$. For the factor $(1-\delta)^{-t}$ of the integral, we have $(1-\delta)^{-t} = 1+t\delta+O(\delta^2) \leq 1+2t\delta$ for any sufficiently small $\delta$. For the integral itself, we use the bound $e^{-\frac{u}{1-\delta}} \leq e^{-u}$, which makes the integral equal to $\mathtt{SER}(\full)$. This concludes the proof. 
\end{proof}
Hence, for sufficiently small $\delta$, the codebook $\mathcal{B}_{\delta}$ can provide the full-diversity gain. Making $\delta$ even smaller, it can also provide an array gain that is arbitrarily close to $\mathtt{g}(\full)$. 

\subsection{The main achievability result}
We can now proceed with the strategy outlined at the end of Section \ref{secBFEncoding}. The following is the main result of this section. 
\begin{theorem}
\label{bfachieve}
For any function $f(P)\in\omega(1)$, there is a quantizer $\mathtt{q}$ with $\mathtt{d}(\mathtt{q}) = \mathtt{d}(\full)$, $\mathtt{g}(\mathtt{q}) = \mathtt{g}(\full)$, and $\mathtt{R}(\mathtt{q}) \leq 1+f(P)\frac{\log P}{P}$ for all sufficiently large $P$. 
\end{theorem}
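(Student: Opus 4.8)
The plan is to carry out exactly the strategy outlined at the end of Section~\ref{secBFEncoding}: apply the variable-length quantizer of Proposition~\ref{firdtprop} to a $P$-dependent family of codebooks supplied by Proposition~\ref{zzprop}, with the codebook accuracy parameter $\delta=\delta_P$ calibrated against the target rate $f(P)$.

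\emph{Step 1: choosing $\delta_P$.} Set $G(\delta)\triangleq(t+1)\mathtt{C}_0\delta^{-2t}\log_2(4\mathtt{C}_0\delta^{-2t})$. Writing $u=\delta^{-2t}$ one checks that $G$ is, on some interval $(0,\delta_0]$, a continuous and strictly decreasing bijection onto $[G(\delta_0),\infty)$. Since $f(P)\in\omega(1)$ we have $f(P)\to\infty$, so $f(P)\ge G(\delta_0)$ for all sufficiently large $P$; for such $P$ define $\delta_P\triangleq G^{-1}(f(P))\in(0,\delta_0]$, so that $G(\delta_P)=f(P)$ and $\delta_P\to 0$ as $P\to\infty$. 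Let $\mathcal{B}_{\delta_P}$ be the finite codebook from Proposition~\ref{zzprop}, and take $\mathtt{q}\triangleq\mathtt{q}_{\mathcal{B}_{\delta_P}}^{\mathrm{v}}$; for the small values of $P$ where the above fails, the definition of $\mathtt{q}$ is immaterial, since the claim only restricts large $P$.

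\emph{Step 2: the rate bound.} Using (\ref{rtboundforqv}), then (\ref{bdeltacardbound}) together with the monotonicity of $x\mapsto x\log_2(4x)$,
\[
\mathtt{R}(\mathtt{q}) \le 1 + \frac{(t+1)|\mathcal{B}_{\delta_P}|\log_2(4|\mathcal{B}_{\delta_P}|)\log P}{P} \le 1 + \frac{G(\delta_P)\log P}{P} = 1 + f(P)\frac{\log P}{P},
\]
which is the required rate bound for all sufficiently large $P$.

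\emph{Step 3: diversity and array gains, and the main obstacle.} Since $\delta_P\to0$, Propositions~\ref{firdtprop} and~\ref{relselprop} both apply for large $P$, and combined with Theorem~\ref{theorem1} they give
\[
\mathtt{SER}(\full) \;\le\; \mathtt{SER}(\mathtt{q}) \;\le\; \mathtt{SER}(\mathtt{q}_{\mathcal{B}_{\delta_P}}) + P^{-(t+1)} \;\le\; \mathtt{SER}(\full)\,(1+2t\delta_P) + P^{-(t+1)}.
\]
Dividing by $\mathtt{SER}(\full)$ and using $P^{-(t+1)}/\mathtt{SER}(\full)\to 0$ (routine, since $\mathtt{d}(\full)=t$ forces $\mathtt{SER}(\full)=\Theta(P^{-t})$), the extreme sides both tend to $1$, so $\mathtt{SER}(\mathtt{q})\sim\mathtt{SER}(\full)$; hence $\mathtt{d}(\mathtt{q})=\mathtt{d}(\full)$ and $\mathtt{g}(\mathtt{q})=\mathtt{g}(\full)$. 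None of this is deep — all the heavy lifting is in Propositions~\ref{firdtprop}, \ref{zzprop} and~\ref{relselprop}. The only point demanding care is Step~1: verifying that $G$ is invertible near $0$ and that the single choice $\delta_P=G^{-1}(f(P))$ simultaneously keeps the rate at exactly $1+f(P)\log P/P$ even when $f$ grows arbitrarily slowly, while still forcing $\delta_P\to0$, which is what the sandwich in Step~3 requires. A secondary, routine check is that $\mathtt{g}(\full)$ is a genuine finite positive limit, obtained by evaluating the $\mathrm{Q}$-function integral defining $\mathtt{SER}(\full)$.
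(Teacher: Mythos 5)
Your proposal is correct and follows essentially the same route as the paper's own proof: your function $G$ is exactly the paper's $\phi$, the choice $\delta_P = G^{-1}(f(P))$ matches the paper's $\delta(P)=\phi^{-1}(f(P))$, and the sandwich $\mathtt{SER}(\full)\le\mathtt{SER}(\mathtt{q})\le\mathtt{SER}(\full)(1+2t\delta_P)+P^{-(t+1)}$ is the same combination of Theorem~\ref{theorem1}, Proposition~\ref{firdtprop}, and Proposition~\ref{relselprop} the paper invokes, with your Steps~1 and~3 merely spelling out the invertibility of $\phi$ and the asymptotic comparison that the paper leaves implicit.
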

\begin{proof}
By (\ref{rtboundforqv}) and (\ref{bdeltacardbound}), for every sufficiently small $\delta$, we have $\mathtt{R}(\mathtt{q}_{\mathcal{B}_{\delta}})\leq 1+\phi(\delta)\frac{\log P}{P}$, where 
$\phi(\delta)\triangleq(t+1)\mathtt{C}_0/\delta^{2t}\times \break\log_2(4\mathtt{C}_0/\delta^{2t})$. Hence, for given $f(P)\in\omega(1)$, we choose $\delta(P) \triangleq \phi^{-1}(f(P))$. Since $f(P)\rightarrow\infty$, $\delta(P)$ is $o(1)$, well-defined, and positive for all sufficiently large $P$. The quantizer $\mathtt{q} = \mathtt{q}_{\mathcal{B}_{\delta(P)}}^{\mathtt{v}}$ now satisfies the theorem's statement. Indeed, by (\ref{serboundforqv}) and (\ref{serqbdeltaubound}), we have $\mathtt{SER}(\mathtt{q}) \leq \mathtt{SER}(\mathtt{Full})(1+2t\delta(P))+1/P^{t+1}$. Since $\delta(P)\in o(1)$, we have $\mathtt{d}(\mathtt{q}) = \mathtt{d}(\full),\mathtt{g}(\mathtt{q}) = \mathtt{g}(\full)$. The upper bound on $\mathtt{R}(\mathtt{q})$ is obvious.
\end{proof}
In particular, for any $f(P) \in o(\frac{P}{\log P})$ (e.g. $f(P) = \log P$), the full-CSIT diversity and array gains can be achieved with a feedback rate of $1$ bit per channel state asymptotically. The question is now to determine the minimum rate that guarantees the full-CSIT gains. We discuss this problem next.

\subsection{Necessary conditions for achieving $\mathtt{d}(\full)$ and $\mathtt{g}(\full)$}
It is difficult to determine the exact asymptotic rate that guarantees the full-CSIT gains. Instead, we provide bounds. Note that by Theorem \ref{bfachieve}, a quantization rate of $1+\frac{\omega(1)\log P}{P}$ is sufficient for the full-CSIT gains. We prove a partial converse by showing that a quantization rate of $1+\frac{t\log P}{13P}$ is necessary.
\begin{theorem}
\label{bfconversetheo}
For any quantizer $\mathtt{q}$, if $\mathtt{d}(\mathtt{q}) = \mathtt{d}(\full),\,\mathtt{g}(\mathtt{q}) = \mathtt{g}(\full)$, then $\mathtt{R}(\mathtt{q}) \geq 1+\frac{t\log P}{13P}$ for all sufficiently large $P$.
\end{theorem}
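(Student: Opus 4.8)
The plan is to convert the hypothesis into a lower bound on the probability mass that the length-one codeword of $\mathtt{q}$ must concede, and then to read the rate bound off from elementary prefix-code arithmetic; throughout $P$ is large and $t\geq2$ (the case $t=1$ is degenerate). First I would note that a quantizer $\mathtt{q}=\{\mathbf{x}_n,\mathcal{E}_n,\mathtt{b}_n\}_{\mathcal{I}}$ achieving $\mathtt{d}(\full)$ and $\mathtt{g}(\full)$ must have $|\mathcal{I}|\to\infty$, since any codebook of bounded cardinality has array gain strictly below $\mathtt{g}(\full)$ \cite{roh1}; in particular, for all large $P$ its prefix-free code is not $\{\mathtt{0},\mathtt{1}\}$ and hence has at most one codeword of length $1$. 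If it has no codeword of length $1$, then $\mathtt{R}(\mathtt{q})=\sum_{n}\mathtt{P}(\mathbf{h}\in\mathcal{E}_n)\mathtt{L}(\mathtt{b}_n)\geq2>1+\tfrac{t\log P}{13P}$ and there is nothing to prove. Otherwise exactly one codeword, say $\mathtt{b}_0$ after relabelling, has length $1$, prefix-freeness forces $\mathtt{L}(\mathtt{b}_n)\geq2$ for $n\geq1$, and therefore
\begin{align}
\label{eq:bfconv-rate}
\mathtt{R}(\mathtt{q})\;\geq\;\mathtt{P}(\mathbf{h}\in\mathcal{E}_0)+2\,\mathtt{P}(\mathbf{h}\notin\mathcal{E}_0)\;=\;1+\mathtt{P}(\mathbf{h}\notin\mathcal{E}_0),
\end{align}
where $\mathbf{x}_0,\mathcal{E}_0$ are the beamforming vector and cell attached to $\mathtt{b}_0$. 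The problem thus reduces to showing $\mathtt{P}(\mathbf{h}\notin\mathcal{E}_0)\geq\tfrac{t\log P}{13P}$ for all large $P$.

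I would prove this by contradiction, supposing $\mathtt{P}(\mathbf{h}\notin\mathcal{E}_0)<\tfrac{t\log P}{13P}$ along some sequence $P\to\infty$. Since $\mathtt{SER}(\mathtt{q})P^{t}$ and $\mathtt{SER}(\full)P^{t}$ both converge to $[\mathtt{g}(\full)]^{-1}$, the identity $\mathtt{SER}(\mathtt{q})=\mathtt{SER}(\full)+\mathtt{E}[d(\mathtt{q}(\mathbf{h}),\mathbf{h})]$ gives $\mathtt{E}[d(\mathtt{q}(\mathbf{h}),\mathbf{h})]=o(P^{-t})$, hence in particular $\int_{\mathcal{E}_0}d(\mathbf{x}_0,\mathbf{h})f(\mathbf{h})\,\mathrm{d}\mathbf{h}=o(P^{-t})$. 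The key construction is a ``bad set'' for the single vector $\mathbf{x}_0$. Writing $\mathbf{h}=\langle\mathbf{x}_0,\mathbf{h}\rangle\mathbf{x}_0+\mathbf{h}_\perp$, the scalar $|\langle\mathbf{x}_0,\mathbf{h}\rangle|^{2}$ is $\mathrm{Exp}(1)$, the squared norm $V=\|\mathbf{h}_\perp\|^{2}$ is $\mathrm{Gamma}(t-1,1)$, and the two are independent (rotational invariance of $\mathtt{CN}(\mathbf{I})$). With $\gamma=\gamma(P)\triangleq\tfrac{t\log P}{12}$ I set $\mathcal{G}\triangleq\{\mathbf{h}:|\langle\mathbf{x}_0,\mathbf{h}\rangle|^{2}P\leq\gamma,\ VP\geq(t+1)\log P\}$. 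On $\mathcal{G}$ one has $\cerr{\mathbf{x}_0}{\mathbf{h}}\geq\mathrm{Q}(\sqrt{2\gamma})$, while $\|\mathbf{h}\|^{2}P\geq VP\geq(t+1)\log P$ gives $\mathrm{Q}(\sqrt{2\|\mathbf{h}\|^{2}P})\leq P^{-(t+1)}$; since the standard bound $\mathrm{Q}(\sqrt{2\gamma})\geq c_1\gamma^{-1/2}e^{-\gamma}=c_1(t\log P/12)^{-1/2}P^{-t/12}$ dwarfs $P^{-(t+1)}$, this yields $d(\mathbf{x}_0,\mathbf{h})\geq\tfrac12\mathrm{Q}(\sqrt{2\gamma})$ on $\mathcal{G}$ for large $P$. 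Independence, $\gamma=o(P)$, and $\mathtt{P}(V\geq(t+1)(\log P)/P)\to1$ (valid because $t\geq2$) then give $\mathtt{P}(\mathbf{h}\in\mathcal{G})=(1-e^{-\gamma/P})\,\mathtt{P}(V\geq(t+1)(\log P)/P)\geq\tfrac{\gamma}{P}(1-o(1))$.

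Combining, and using $d(\mathbf{x}_0,\cdot)\geq0$ everywhere together with $d(\mathbf{x}_0,\cdot)\geq\tfrac12\mathrm{Q}(\sqrt{2\gamma})$ on $\mathcal{G}$,
\begin{align}
o(P^{-t})\;\geq\;\int_{\mathcal{E}_0}d(\mathbf{x}_0,\mathbf{h})f(\mathbf{h})\,\mathrm{d}\mathbf{h}\;\geq\;\tfrac12\mathrm{Q}(\sqrt{2\gamma})\bigl(\mathtt{P}(\mathbf{h}\in\mathcal{G})-\mathtt{P}(\mathbf{h}\notin\mathcal{E}_0)\bigr).
\end{align}
By the contradiction hypothesis the last factor is at least $\tfrac{t\log P}{P}\bigl(\tfrac1{12}(1-o(1))-\tfrac1{13}\bigr)\geq\tfrac{t\log P}{200P}$ for large $P$, so the right-hand side is at least $c_2\sqrt{\log P}\,P^{-(1+t/12)}$. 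As $1+\tfrac{t}{12}<t$ for $t\geq2$, this is $\omega(P^{-t})$, contradicting the left-hand side. Hence $\mathtt{P}(\mathbf{h}\notin\mathcal{E}_0)\geq\tfrac{t\log P}{13P}$ for all large $P$, and \eqref{eq:bfconv-rate} gives the claim.

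The step I expect to be the crux is the calibration of the threshold $\gamma$: a constant threshold forces only $\mathtt{P}(\mathbf{h}\notin\mathcal{E}_0)\gtrsim1/P$, losing the logarithmic factor, while any $\gamma$ growing faster than $\log P$ makes $\mathrm{Q}(\sqrt{2\gamma})$ sub-polynomial in $1/P$ and collapses the estimate; the usable window is $\gamma=c\log P$ with $c$ below $t-1$, and the conservative choice $c=t/12$ leaves so much slack that the very same argument in fact yields $\mathtt{R}(\mathtt{q})\geq1+(t-1-\varepsilon)\tfrac{\log P}{P}$ for any $\varepsilon>0$ — so the constant $13$ is far from tight. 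Everything else — the prefix-code inequality \eqref{eq:bfconv-rate}, the reduction to a single dominant cell via \cite{roh1}, and the Gaussian- and Gamma-tail asymptotics — is routine.
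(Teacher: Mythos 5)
Your proposal is correct, and it reaches the same conclusion by a genuinely different route. Both proofs share the opening move: with $|\mathcal{I}|\geq 3$ a prefix-free code has at most one length-$1$ word, so $\mathtt{R}(\mathtt{q})\geq 1+\mathtt{P}(\mathbf{h}\notin\mathcal{E}_0)$, reducing the claim to a lower bound on the mass the dominant cell must concede. After that the arguments diverge. The paper bounds $\mathtt{SER}(\mathtt{q})$ directly from below: it substitutes the Chernoff-type inequality $\mathrm{Q}(x)\geq\mathtt{C}_1 e^{-x^2}$, uses unitary invariance to fix $\mathbf{x}_0=[1\,0\cdots 0]^T$, and then invokes a variational extremal result (Wang's domain-variation lemma) to identify the \emph{optimal} worst-case cell as the exact threshold set $\{|h_1|^2\geq r\}$; evaluating the resulting one-dimensional integral gives $\mathtt{SER}(\mathtt{q})\gtrsim P^{-1-6t/13}$, hence $\mathtt{d}(\mathtt{q})\leq 1+6t/13<t$, and no array-gain hypothesis is even needed. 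You instead work with the distortion $d(\mathbf{x}_0,\mathbf{h})$, use \emph{both} gain hypotheses to deduce $\mathtt{E}[d]=o(P^{-t})$, and rather than optimizing over $\mathcal{E}_0$ you hand-pick a suboptimal bad set $\mathcal{G}$ (small projection onto $\mathbf{x}_0$, large orthogonal component) whose probability you control via the independence of the Exponential and Gamma components, then close with inclusion–exclusion $\mathtt{P}(\mathcal{E}_0\cap\mathcal{G})\geq\mathtt{P}(\mathcal{G})-\mathtt{P}(\mathcal{E}_0^c)$. Your route is more elementary (no appeal to an extremal domain-variation theorem) and, because you use the sharper Mills-ratio lower bound $\mathrm{Q}(\sqrt{2\gamma})\gtrsim\gamma^{-1/2}e^{-\gamma}$ rather than $e^{-2\gamma}$, your calibration window for $\gamma$ is wider and — as you correctly observe — yields the improved constant $1+(t-1-\varepsilon)\tfrac{\log P}{P}$ versus the paper's $1+\tfrac{t\log P}{13P}$; the trade-off is that your argument uses the array-gain hypothesis in an essential way, whereas the paper's lower-bounds $\mathtt{SER}(\mathtt{q})$ unconditionally and only invokes the gap with $\mathtt{d}(\full)$ at the very end. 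One minor redundancy: you do not actually need the assertion $|\mathcal{I}|\to\infty$; as in the paper, $|\mathcal{I}|\geq 3$ for large $P$ suffices (the $|\mathcal{I}|\leq 2$ cases make the hypotheses vacuous), and your subsequent steps never use more.
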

\begin{proof}
Let us first (roughly) identify the class of quantizers that can achieve $\mathtt{d}(\full)$ and $\mathtt{g}(\full)$.
Let $\mathtt{q} = \{\mathbf{x}_n, \mathcal{E}_n, \mathtt{b}_n \}_{\mathcal{I}}$. The case $|\mathcal{I}| = 1$ corresponds to open-loop system with no CSIT, in which case we have $\mathtt{d}(\mathtt{q}) \leq 1 < \mathtt{d}(\mathtt{full})$. The case $|\mathcal{I}| = 2$ corresponds to a quantizer whose codebook consists of $2$ beamforming vectors. In such a scenario, $\mathtt{d}(\mathtt{q}) = 2$ is achievable with $\mathtt{R}(\mathtt{q}) = 1$. On the other hand, for $t=2$, either $\mathtt{d}(\mathtt{q}) < \mathtt{d}(\full) = 2$, or if $\mathtt{d}(\mathtt{q}) = \mathtt{d}(\full) = 2$, then $\mathtt{g}(\mathtt{q}) < \mathtt{g}(\full)$ as one cannot achieve the full-CSIT array gain with only $2$ beamforming vectors. For $t\geq 3$, we simply have $\mathtt{d}(\mathtt{q}) \leq 2 < 3\leq \mathtt{d}(\full)$. 

It is thus sufficient to consider the case $|\mathcal{I}| \geq 3$ for the class of quantizers that can achieve $\mathtt{d}(\full)$ and $\mathtt{g}(\full)$. Hence, let $\mathtt{q} = \{\mathbf{x}_n, \mathcal{E}_n, \mathtt{b}_n \}_{\mathcal{I}}$ with $|\mathcal{I}|\geq 3$ and suppose that for every $P_0\in\mathbb{R}$, there exists $P \geq P_0$ such that $\mathtt{R}(\mathtt{q}) < 1+\frac{t\log P}{13P}$. We will conclude the proof of the theorem by showing that the strict inequality $\mathtt{d}(\mathtt{q}) < \mathtt{d}(\full)$ then holds.

Let $R = \frac{t\log P}{13P}$. Since, $|\mathcal{I}| \geq 3$, there is an index $i\in\mathcal{I}$ with $\mathtt{L}(\mathtt{b}_i) \geq 1$ and $\mathtt{L}(\mathtt{b}_j) \geq 2,\,\forall j\in\mathcal{I}-\{i\}$ (Otherwise, there would be two binary codewords of length $1$, which is impossible for a prefix-free code that contains at least $3$ codewords.). Then, we have $\mathtt{P}(\mathbf{h}\in\mathcal{E}_i) \geq 1-R$ (Otherwise, the chain of inequalities 
\begin{align}
\textstyle 1+R & > \mathtt{R}(\mathtt{q}) \\ & \textstyle = \sum_{n\in\mathcal{I}}\mathtt{P}(\mathbf{h}\in\mathcal{E}_n)\mathtt{L}(\mathtt{b}_n) \\ & \textstyle = \mathtt{P}(\mathbf{h}\in\mathcal{E}_i)\mathtt{L}(\mathtt{b}_i) + \sum_{j\in\mathcal{I}-\{i\}}\mathtt{P}(\mathbf{h}\in\mathcal{E}_j)\mathtt{L}(\mathtt{b}_j) \\ & \textstyle \geq \mathtt{P}(\mathbf{h}\in\mathcal{E}_i) + 2\sum_{j\in\mathcal{I}-\{i\}}\mathtt{P}(\mathbf{h}\in\mathcal{E}_j) \\ & \textstyle= \mathtt{P}(\mathbf{h}\in\mathcal{E}_i) + 2(1-\mathtt{P}(\mathbf{h}\in\mathcal{E}_i))\\ & \textstyle = 2-\mathtt{P}(\mathbf{h}\in\mathcal{E}_i) \\ & \textstyle\geq 1+R  
\end{align}
leads to a contradiction.). Without loss of generality, suppose that $\mathtt{P}(\mathbf{h}\in\mathcal{E}_0) \geq 1-R$. Then, with $f(\mathbf{h})$ representing the probability density function of $\mathbf{h}$, we have
\begin{align}
 \textstyle \mathtt{SER}(\mathtt{q}) & \textstyle = \sum_{n\in\mathcal{I}}\int_{\mathcal{E}_n} \mathrm{Q}(\sqrt{2|\langle \mathbf{x}_n,\mathbf{h}\rangle|^2 P})f(\mathbf{h})\mathrm{d}\mathbf{h} \\ &
 \textstyle \geq \int_{\mathcal{E}_0} \mathrm{Q}(\sqrt{2|\langle \mathbf{x}_0,\mathbf{h}\rangle|^2 P})f(\mathbf{h})\mathrm{d}\mathbf{h}
 \end{align}
 According to \cite[Theorem 2.1]{cote1}, there is a constant $\mathtt{C}_1>0$ such that $\mathrm{Q}(x) \geq \mathtt{C}_1 \exp(-x^2),\,\forall x\in\mathbb{R}$. Substituting this lower bound, we obtain
 \begin{align}
 \mathtt{SER}(\mathtt{q}) \geq g(\mathcal{E}_0,\mathbf{x}_0),
 \end{align}
 where 
 \begin{align}
 g(\mathcal{E},\mathbf{x}) = \textstyle \mathtt{C}_1 \int_{\mathcal{E}} \exp(-2|\langle \mathbf{x},\mathbf{h}\rangle|^2 P)f(\mathbf{h})\mathrm{d}\mathbf{h}
 \end{align}
Now, let $\mathfrak{E} = \{\mathcal{E}\subset\mathbb{C}^t:\mathtt{P}(\mathbf{h}\in\mathcal{E}) \geq 1-R\}$. Clearly, $\mathcal{E}_0\in\mathfrak{E}$, and therefore
\begin{align}
\mathtt{SER}(\mathtt{q}) \geq \inf_{\mathcal{E}\in\mathfrak{E}}\inf_{\mathbf{x}\in\bfset}g(\mathcal{E},\mathbf{x}).
\end{align}
Note that for any $t\times t$ unitary matrix $\mathbf{U}$, we have $g(\mathcal{E},\mathbf{x}) = g(\mathbf{U}\mathcal{E},\mathbf{U}\mathbf{x}),\,\forall\mathcal{E}\in\mathfrak{E},\,\forall\mathbf{x}\in\bfset$, where $\mathbf{U}\mathcal{E} = \{\mathbf{U}\mathbf{h}:\mathbf{h}\in\mathcal{E}\}$ denotes the translate of the set $\mathcal{E}$ by $\mathbf{U}$. With this property in mind, we consider a fixed vector $\mathbf{y}\in\bfset$. For a given $\mathbf{x}\in\bfset$, let the unitary matrix $\mathbf{U}_{\mathbf{x}}$ satisfy $\mathbf{x} = \mathbf{U}_{\mathbf{x}}\mathbf{y}$. We have
$g(\mathcal{E},\mathbf{x}) = g(\mathbf{U}_{\mathbf{x}}^{\dagger}\mathcal{E},\mathbf{U}_{\mathbf{x}}^{\dagger}\mathbf{x}) = g(\mathbf{U}_{\mathbf{x}}^{\dagger}\mathcal{E},\mathbf{y})$. Since $\mathbf{U}_{\mathbf{x}}^{\dagger}\mathcal{E}\in\mathfrak{E}$, we have $g(\mathcal{E},\mathbf{x})\geq \inf_{\mathcal{E}'\in\mathfrak{E}}g(\mathcal{E}',\mathbf{y})$. Since this inequality holds for arbitrary $\mathbf{x}$ and $\mathcal{E}$, we obtain $\inf_{\mathcal{E}\in\mathfrak{E}}\inf_{\mathbf{x}\in\bfset}g(\mathcal{E},\mathbf{x})\geq \inf_{\mathcal{E}\in\mathfrak{E}}g(\mathcal{E},\mathbf{y})$. Choosing e.g. $\mathbf{y} = [1\,0\,\cdots\,0]^T$ then gives us
\begin{align}
\textstyle \mathtt{SER}(\mathtt{q}) \geq \mathtt{C}_1\inf_{\mathcal{E}\in\mathfrak{E}} \int_{\mathcal{E}} \exp(-2|h_1|^2 P)f(\mathbf{h})\mathrm{d}\mathbf{h}
\end{align}
The expression in the lower bound is an optimization problem of the form ``Minimize $\int_{\mathcal{E}}\exp(-2|h_1|^2 P)\mathrm{d}\mu$, subject to $\mu(\mathcal{E}) \geq 1-R$,'' where $\mu$ is a probability measure. According to \cite{pkcwang1}, there is a minimizer of the form $\mathcal{E}' = \{\mathbf{h}: |h_1|^2 \geq r\}$, where $r$ is a positive real number with $\mu(\mathcal{E}') = 1-R$. In other words, one forms the solution set $\mathcal{E}'$ by starting with the points where the integrand $\exp(-2|h_1|^2 P)$ takes its minimal values and then progressively adds more points until the measure of $\mathcal{E}'$ is equal to $r$. We have
\begin{align}
\textstyle \int_{\{\mathbf{h}: |h_1|^2 \geq r\}}f(\mathbf{h})\mathrm{d}\mathbf{h} = 1-R,
\end{align}
or, equivalently $\int_r^{\infty}e^{-x}\mathrm{d}x = 1-R$. Solving for $r$, we obtain $r = -\log(1-R)$, and thus
\begin{align}
\textstyle \mathtt{SER}(\mathtt{q}) & \textstyle \geq \mathtt{C}_1\int_{-\log(1-R)}^{\infty} \exp(-2x P)e^{-x}\mathrm{d}x \\
& = \frac{\mathtt{C}_1\exp[(1+2P)\log(1-R)]}{1+2P} \\
& \geq \frac{\mathtt{C}_1\exp(-6PR)}{3P},
\end{align}
where the last inequality follows since $1+2P \leq 3P,\,\forall P \geq 1$, and $-\log(1-R) \leq 2R$ for sufficiently small $R$. Now, substituting $R = \frac{t\log P}{13P}$, the lower bound is proportional to $P^{-1 - \frac{6t}{13}}$. This means $d(\mathtt{q}) \leq 1 + \frac{6t}{13} < t,\,\forall t\geq 2$, if $\mathtt{d}(\mathtt{q})$ ever exists. This concludes the proof.  
\end{proof}
We thus say that the necessary and sufficient feedback rate that guarantees the full-CSIT gains is $1+\omega(1)\frac{\log P}{P}$, up to $o(1)$ multipliers in the $P$-dependent term $\omega(1)\frac{\log P}{P}$. The fundamental question of whether the rate of $1+\omega(1)\frac{\log P}{P}$ is necessary will remain an open problem.

\section{VLQs for Precoding}
\label{secPC}
So far, we have designed VLQs that can achieve the full-CSIT gains with rate $\smash[b]{1+\frac{\omega(1)\log P}{P}}$. The question is whether or not the decay rate of the term $\smash[b]{\frac{\omega(1)\log P}{P}}$ can be improved using more sophisticated data transmission strategies. The answer is yes, and as a proof of concept, we consider linear precoding of a complex orthogonal space-time block code.
\subsection{Linear precoding of complex orthogonal designs}
Consider a fixed $\mathbf{h}$. We consider the transmission of $k$ symbols $s_1,\ldots,s_k$ over $n$ time slots via an $n \times t$ complex orthogonal design $\mathbf{S}$ (with $\mathbf{S}^{\dagger}\mathbf{S}\!=\! \sum_k|s_k|^2 \mathbf{I},\, \forall s_1,\ldots,s_k\in\mathbb{C}$) multiplied by a precoding matrix $\mathbf{X}\in\mathcal{X}$, where $\mathcal{X} = \{\mathbf{X}\in\mathbb{C}^{t\times t}:\|\mathbf{X}\|\leq 1\}$. The channel input-output relationship can be expressed as $\mathbf{y} = \mathbf{S}\mathbf{X}\mathbf{h}\sqrt{P/r} + \mathbf{n}$, where $r = k/n$ is the space-time code rate, $\mathbf{y}$ is the received signal vector, and $\mathbf{n}\simeq\mathtt{CN}(\mathbf{I})$. Due to the orthogonality of the code, the per-symbol conditional SNR is $\|\mathbf{X}\mathbf{h}\|^2P/r$. It follows that the conditional SER (for BPSK) is given by $\mathrm{Q}(\sqrt{2\|\mathbf{X}\mathbf{h}\|^2P/r})$. In particular, if, for example, $\mathbf{X} = \mathbf{x}\mathbf{x}^{\dagger}$ for some $\mathbf{x}\in\bfset$, the conditional SER is $\mathrm{Q}(\sqrt{2|\langle\mathbf{x},\mathbf{h}\rangle|^2P/r})$. Hence, up to the $\frac{1}{r}$-scaling of the SNR, using the precoding matrix $\mathbf{x}\mathbf{x}^{\dagger}$ is equivalent to beamforming along $\mathbf{x}$.\footnote{In fact, using any precoding matrix of the form $\mathbf{z}\mathbf{x}^{\dagger},\,\mathbf{z}\in\bfset$ is equivalent to beamforming along $\mathbf{x}$. For example, $\mathbf{z} = [1\,\cdots\,1]^T$ gives us the matrix $[\mathbf{x}\,\cdots\,\mathbf{x}]^{\dagger}$. We have picked $\mathbf{z}=\mathbf{x}$ (which yields $\mathbf{x}\mathbf{x}^{\dagger}$) for a simpler notation.}

When $\mathbf{h}$ is random, we consider a mapping $\mathtt{M}:\mathbb{C}^{t\times 1}\rightarrow\mathcal{X}$, and let $\mathtt{SER}_r(\mathtt{M}) \triangleq \mathtt{E}[ \mathrm{Q}(\sqrt{2\|\mathtt{M}(\mathbf{h})\mathbf{h}\|^2P/r})]$ denote the SER with $\mathtt{M}$. The diversity and array gains $\mathtt{d}(\mathtt{M})$ and $\mathtt{g}(\mathtt{M})$ with mapping $\mathtt{M}$ can be defined in the same manner as in Section \ref{secSystemModelSystemModel}. The full-CSIT mapping $\mathtt{FULL}(\mathbf{h}) \triangleq \frac{\mathbf{h}\mathbf{h}^{\dagger}}{\|\mathbf{h}\|^2}$ provides the minimum possible SER, $\mathtt{SER}_r(\mathtt{FULL}) = \mathtt{E}[ \mathrm{Q}(\sqrt{2\|\mathbf{h}\|^2P/r})]$, by beamforming along $\frac{\mathbf{h}}{\|\mathbf{h}\|}$ for every $\mathbf{h}$.\footnote{For $t=2$ and the Alamouti code, we have $r=1$, and thus $\mathtt{SER}_1(\mathtt{FULL}) = \mathtt{SER}(\full)$. For $t \geq 3$, due to the non-existence of rate-$1$ complex-orthogonal designs, we have $\mathtt{SER}_r(\mathtt{FULL}) < \mathtt{SER}(\full)$: A ``part'' of what is lost in terms of data rate translates to a constant power gain.} 
\subsection{Main results}
The quantizer definition in Section \ref{chquantdescription} for beamforming extends to precoding in an obvious manner. We let the collection of triples $\mathtt{Q} \triangleq \{\mathbf{X}_n,\mathcal{E}_n,\mathtt{b}_n\}_{\mathcal{I}}$ be a channel quantizer for the precoding strategy, where $\{\mathbf{X}_n\}_{\mathcal{I}}\subset\mathcal{X}$ with $\{\mathcal{E}_n\}_{\mathcal{I}}$ and $\{\mathtt{b}_n\}_{\mathcal{I}}$ satisfying the properties described in Section \ref{chquantdescription}. 

It is straightforward to show that Theorem 1 holds for the precoding case as well. We thus construct quantized precoders that can achieve the full-CSIT diversity and array gains. We omit here a detailed exposition of the construction due to lack of space and since the ideas are very similar to the ones we have applied for beamforming. 

We consider the precoding codebook $\mathcal{C}_{\delta} \triangleq \{\mathbf{I}/\sqrt{t}\}\cup\{\mathbf{x}\mathbf{x}^{\dagger}:\mathbf{x}\in\mathcal{B}_{\delta}\}$ that consists of the identity precoder and the precoder equivalents of the beamforming vectors in $\mathcal{B}_{\delta}$. We then construct a VLQ $\mathtt{Q}_{\mathcal{C}_{\delta}}^{\mathtt{v}}$ that operates as follows:
\begin{itemize}
\item If $\|[\mathbf{I}/\sqrt{t}]\mathbf{h}\|^2 P \geq \delta^{-1}$, then $\mathtt{Q}_{\mathcal{C}}^{\mathtt{v}}$ feeds back the binary codeword $\mathtt{0}$, and we set $\mathtt{Q}_{\mathcal{C}_{\delta}}^{\mathtt{v}}(\mathbf{h}) = \mathbf{I}/\sqrt{t}$. 
\item Otherwise, $\mathtt{Q}_{\mathcal{C}_{\delta}}^{\mathtt{v}}$ feeds back the concatenation of the binary codeword $\mathtt{1}$ and the binary codeword of length $\lceil \log_2 |\mathcal{B}_{\delta}| \rceil$ bits that represents $\mathtt{q}_{\mathcal{B}_{\delta}}(\mathbf{h})$. We set $\mathtt{Q}_{\mathcal{C}_{\delta}}^{\mathtt{v}}(\mathbf{h}) = [\mathtt{q}_{\mathcal{B}_{\delta}}(\mathbf{h})][\mathtt{q}_{\mathcal{B}_{\delta}}(\mathbf{h})]^{\dagger}$.
\end{itemize}

In the following, we analyze the performance of $\mathtt{Q}_{\mathcal{C}_{\delta}}^{\mathtt{v}}$.
\begin{proposition}
For every sufficiently small $\delta$, 
\begin{align}
\label{serbndforpc}
\mathtt{SER}_r(\mathtt{Q}_{\mathcal{C}_{\delta}}^{\mathtt{v}}) & \leq \mathtt{SER}_r(\mathtt{FULL})(1+2t\delta) + \delta/P^t,\\
\label{rtbndforpc}
\mathtt{R}(\mathtt{Q}_{\mathcal{C}_{\delta}}^{\mathtt{v}}) & \leq 1+\mathtt{C}_2\delta^{-t}\log(\delta^{-1})/P^t,
\end{align}
where $\mathtt{C}_2$ is a constant that is independent of $\delta$ and $P$.
\end{proposition}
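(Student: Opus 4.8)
The plan is to adapt the arguments behind Propositions~\ref{firdtprop} and~\ref{relselprop} to the precoder $\mathtt{Q}_{\mathcal{C}_{\delta}}^{\mathtt{v}}$, splitting the channel space into the event $\mathcal{E}^{c} \triangleq \{\mathbf{h} : \|\mathbf{h}\|^2 P / t \geq \delta^{-1}\}$, on which the identity precoder $\mathbf{I}/\sqrt{t}$ is used, and its complement $\mathcal{E}$, on which the rank-one precoder $[\mathtt{q}_{\mathcal{B}_{\delta}}(\mathbf{h})][\mathtt{q}_{\mathcal{B}_{\delta}}(\mathbf{h})]^{\dagger}$ is used. First I would record the per-symbol SNRs on the two events: since $\|[\mathbf{I}/\sqrt{t}]\mathbf{h}\|^2 = \|\mathbf{h}\|^2/t$, the SNR on $\mathcal{E}^{c}$ equals $\|\mathbf{h}\|^2 P/(tr)$; and since $\|\mathbf{x}\mathbf{x}^{\dagger}\mathbf{h}\|^2 = |\langle\mathbf{x},\mathbf{h}\rangle|^2$ for $\mathbf{x}\in\bfset$, Proposition~\ref{zzprop} applied to $\mathbf{h}/\|\mathbf{h}\|$ gives $|\langle\mathtt{q}_{\mathcal{B}_{\delta}}(\mathbf{h}),\mathbf{h}\rangle|^2 \geq \|\mathbf{h}\|^2(1-\delta)$, so the SNR on $\mathcal{E}$ is at least $\|\mathbf{h}\|^2(1-\delta)P/r$.

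For the SER bound~(\ref{serbndforpc}), on $\mathcal{E}$ I would bound the conditional SER by $\mathrm{Q}(\sqrt{2\|\mathbf{h}\|^2(1-\delta)P/r})$, drop the indicator of $\mathcal{E}$, and repeat verbatim the change-of-variables computation in the proof of Proposition~\ref{relselprop} with $P/r$ in place of $P$; this yields $\mathtt{E}[\mathbf{1}_{\mathcal{E}}\mathrm{Q}(\cdots)] \leq (1-\delta)^{-t}\mathtt{SER}_r(\mathtt{FULL}) \leq (1+2t\delta)\mathtt{SER}_r(\mathtt{FULL})$ for all sufficiently small $\delta$. On $\mathcal{E}^{c}$ I would use $\mathrm{Q}(x)\leq e^{-x^2}$ to get a conditional SER of at most $\exp(-2\|\mathbf{h}\|^2 P/(tr))$ and then integrate this against the density $x^{t-1}e^{-x}/\Gamma(t)$ of $\|\mathbf{h}\|^2$ over $\{x \geq t\delta^{-1}/P\}$; the substitution $u = 2xP/(tr)$ turns the integral into $(tr/(2P))^{t}\Gamma(t)^{-1}\int_{2\delta^{-1}/r}^{\infty}u^{t-1}e^{-u}\,\mathrm{d}u$, and because $\int_{a}^{\infty}u^{t-1}e^{-u}\,\mathrm{d}u$ decays exponentially in $a$, this term is at most $\delta/P^{t}$ once $\delta$ is small enough, by a condition not involving $P$. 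Adding the two contributions to $\mathtt{SER}_r(\mathtt{Q}_{\mathcal{C}_{\delta}}^{\mathtt{v}})$ gives~(\ref{serbndforpc}).

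For the rate bound~(\ref{rtbndforpc}), note that $\mathtt{Q}_{\mathcal{C}_{\delta}}^{\mathtt{v}}$ feeds back the single bit $\mathtt{0}$ on $\mathcal{E}^{c}$ and the codeword $\mathtt{1}$ followed by $\lceil\log_2|\mathcal{B}_{\delta}|\rceil$ bits on $\mathcal{E}$, so $\mathtt{R}(\mathtt{Q}_{\mathcal{C}_{\delta}}^{\mathtt{v}}) = 1 + \mathtt{P}(\mathbf{h}\in\mathcal{E})\lceil\log_2|\mathcal{B}_{\delta}|\rceil$. I would bound $\mathtt{P}(\mathbf{h}\in\mathcal{E}) = \mathtt{P}(\|\mathbf{h}\|^2 < t\delta^{-1}/P) \leq (t\delta^{-1}/P)^{t}/\Gamma(t+1)$ by dropping the factor $e^{-x}\leq 1$ in the distribution function of $\|\mathbf{h}\|^2$, and bound $\lceil\log_2|\mathcal{B}_{\delta}|\rceil \leq \log_2(2|\mathcal{B}_{\delta}|) \leq \log_2(2\mathtt{C}_0) + 2t\log_2(\delta^{-1})$ via~(\ref{bdeltacardbound}). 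Multiplying these two estimates shows $\mathtt{P}(\mathbf{h}\in\mathcal{E})\lceil\log_2|\mathcal{B}_{\delta}|\rceil \leq \mathtt{C}_2\delta^{-t}\log(\delta^{-1})/P^{t}$ for a suitable $\delta$- and $P$-independent constant $\mathtt{C}_2$ and all sufficiently small $\delta$, which is~(\ref{rtbndforpc}).

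No step here is genuinely hard; the only estimate requiring a little care is the $\mathcal{E}^{c}$ term in~(\ref{serbndforpc}), where one must check that the exponentially small tail $\int_{2\delta^{-1}/r}^{\infty}u^{t-1}e^{-u}\,\mathrm{d}u$ dominates the polynomial prefactor $(tr/2)^t$ so that the whole term is $\leq\delta/P^{t}$ uniformly in $P$. This is exactly the place where using a \emph{constant} threshold $\delta^{-1}$ (rather than a threshold growing like $\log P$, as in the beamforming encoder of Section~\ref{secBFEncoding}) is justified, and it is the reason the $P$-dependent term improves from $(\log P)/P$ to $1/P^{t}$: the identity precoder makes the SNR proportional to $P$ for \emph{every} channel state whose norm is bounded away from zero, and the probability that $\|\mathbf{h}\|^2$ fails to be bounded away from zero at the $1/P$ scale is itself of order $1/P^{t}$, thanks to the $t$-fold vanishing of the density of $\|\mathbf{h}\|^2$ at the origin.
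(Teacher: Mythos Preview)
Your proposal is correct and follows essentially the same route as the paper: both split according to the event $\{\|\mathbf{h}\|^2P/t\geq\delta^{-1}\}$, bound the beamforming contribution via the $(1-\delta)$-covering property of $\mathcal{B}_{\delta}$ exactly as in Proposition~\ref{relselprop}, bound the identity-precoder contribution by the incomplete-Gamma tail $\int_{\Theta(\delta^{-1})}^{\infty}u^{t-1}e^{-u}\,\mathrm{d}u$ after a $u\propto xP$ substitution, and handle the rate by $\mathtt{P}(\|\mathbf{h}\|^2<t\delta^{-1}/P)\leq (t\delta^{-1}/P)^t/\Gamma(t+1)$ combined with~(\ref{bdeltacardbound}). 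The only cosmetic difference is that the paper first bounds $\mathtt{SER}_r(\mathtt{Q}_{\mathcal{C}_{\delta}}^{\mathtt{v}})\leq\mathtt{SER}_r(\mathtt{q}_{\mathcal{B}_{\delta}})+\delta/P^t$ and then invokes Proposition~\ref{relselprop}, whereas you merge these into a single step; also, the paper silently drops the $1/r$ factor in the identity-precoder term (harmless since $r\leq1$), while you keep it.
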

\begin{proof}
By construction, the precoding quantizer $\mathtt{Q}_{\mathcal{C}_{\delta}}^{\mathtt{v}}$ provides the same SNR as the beamforming quantizer $\mathtt{q}_{\mathcal{B}_{\delta}}$ except when $\|\mathbf{h}\|^2 P \geq t \delta^{-1}$, in which case it provides an SNR of $\|\mathbf{h}\|^2P/t$. This gives us
\begin{align}
\mathtt{SER}_r(\mathtt{Q}_{\mathcal{C}_{\delta}}^{\mathtt{v}}) & \leq \mathtt{SER}_r(\mathtt{q}_{\mathcal{B}_{\delta}}) + \int_{\frac{t}{\delta P}}^{\infty} \mathrm{Q}(\sqrt{2xP/t})\frac{x^{t-1}e^{-x}}{\Gamma(t)}\mathrm{d}x \\
& \leq \mathtt{SER}_r(\mathtt{q}_{\mathcal{B}_{\delta}}) + \int_{\frac{t}{\delta P}}^{\infty} \exp(-xP/t)\frac{x^{t-1}}{\Gamma(t)}\mathrm{d}x \\
& = \mathtt{SER}_r(\mathtt{q}_{\mathcal{B}_{\delta}}) + \frac{t^t}{P^t}\int_{\delta^{-1}}^{\infty} \frac{u^{t-1}e^{-u}}{\Gamma(t)}\mathrm{d}u \\
& = \mathtt{SER}_r(\mathtt{q}_{\mathcal{B}_{\delta}}) + \frac{t^t}{P^t}e^{-\delta^{-1}}\sum_{k=0}^{t-1} \frac{\delta^{-k}}{k!} \\
& \leq \mathtt{SER}_r(\mathtt{q}_{\mathcal{B}_{\delta}}) + \frac{\delta}{P^t},
\end{align}
where $\mathtt{SER}_r(\mathtt{q}_{\mathcal{B}_{\delta}}) \triangleq \mathtt{E}[\mathrm{Q}(\sqrt{2|\langle \mathtt{q}_{\mathcal{B}_{\delta}}(\mathbf{h}),\mathbf{h}\rangle|^2P/r})]$, and the last inequality holds for all sufficiently small $\delta$. Then, (\ref{serbndforpc}) follows since (using the same arguments as in Proposition \ref{relselprop}) $\mathtt{SER}_r(\mathtt{q}_{\mathcal{B}_{\delta}}) \leq \mathtt{SER}_r(\mathtt{FULL})(1+2t\delta)$.

For (\ref{rtbndforpc}), note that $\mathtt{Q}_{\mathcal{C}_{\delta}}^{\mathtt{v}}$ feeds back $1$ bit if $\|\mathbf{h}\|^2 P \geq t \delta^{-1}$, and it feeds back $1+\lceil \log_2 |\mathcal{B}_{\delta}| \rceil$ bits if $\|\mathbf{h}\|^2 P < t \delta^{-1}$. The latter event has probability 
\begin{align}
\int_0^{\frac{t}{\delta P}} \frac{x^{t-1}e^{-x}}{\Gamma(t)}\mathrm{d}x \leq \int_0^{\frac{t}{\delta P}} \frac{x^{t-1}}{\Gamma(t)}\mathrm{d}x = \frac{t^t\delta^{-t}}{\Gamma(t+1)P^t}.
\end{align}
Combining this with the bound $|\mathcal{B}_{\delta}| \leq \mathtt{C}_0\delta^{-2t}$ in (\ref{bdeltacardbound}), we obtain
\begin{align}
\mathtt{R}(\mathtt{Q}_{\mathcal{C}_{\delta}}^{\mathtt{v}}) \leq 1 + (1+\lceil \mathtt{C}_0\delta^{-2t} \rceil)\frac{t^t\delta^{-t}}{\Gamma(t+1)P^t},
\end{align}
After some straightforward manipulations, this yields the same upper bound as in the statement of the proposition.
\end{proof}
This immediately leads to the following theorem. The proof is omitted since it is very similar to the proof of Theorem \ref{bfachieve}.
\begin{theorem}
\label{pcachieve}
For any function $f(P)\in\omega(1)$, there is a precoding quantizer $\mathtt{Q}$ with $\mathtt{d}(\mathtt{Q}) = \mathtt{d}(\mathtt{FULL})$, $\mathtt{g}(\mathtt{Q}) = \mathtt{g}(\mathtt{FULL})$, and $\mathtt{R}(\mathtt{Q}) \leq 1+ \frac{f(P)}{P^t}$ for all sufficiently large $P$. 
\end{theorem}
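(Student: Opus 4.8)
The plan is to mirror the structure of the proof of Theorem \ref{bfachieve} exactly, replacing the beamforming bounds (\ref{serboundforqv})--(\ref{rtboundforqv}) with the precoding bounds (\ref{serbndforpc})--(\ref{rtbndforpc}). First I would observe that the rate bound (\ref{rtbndforpc}) gives $\mathtt{R}(\mathtt{Q}_{\mathcal{C}_{\delta}}^{\mathtt{v}}) \leq 1 + \psi(\delta)/P^t$ where $\psi(\delta) \triangleq \mathtt{C}_2\delta^{-t}\log(\delta^{-1})$. Since $\psi$ is continuous, strictly decreasing, and tends to $\infty$ as $\delta \to 0^+$, it admits a well-defined inverse $\psi^{-1}$ on a neighbourhood of $+\infty$. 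Given any $f(P) \in \omega(1)$, I would then set $\delta(P) \triangleq \psi^{-1}(f(P))$; because $f(P) \to \infty$, this is positive and well-defined for all sufficiently large $P$, and $\delta(P) \in o(1)$.

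Next I would take $\mathtt{Q} \triangleq \mathtt{Q}_{\mathcal{C}_{\delta(P)}}^{\mathtt{v}}$ and verify the three claimed properties. The rate bound $\mathtt{R}(\mathtt{Q}) \leq 1 + f(P)/P^t$ is immediate from (\ref{rtbndforpc}) and the definition of $\delta(P)$. For the diversity and array gains, I would substitute $\delta = \delta(P)$ into (\ref{serbndforpc}) to get
\begin{align}
\nonumber \mathtt{SER}_r(\mathtt{Q}) \leq \mathtt{SER}_r(\mathtt{FULL})(1 + 2t\,\delta(P)) + \delta(P)/P^t.
\end{align}
Since $\mathtt{d}(\mathtt{FULL}) = t$ and $\mathtt{SER}_r(\mathtt{FULL}) \sim [\mathtt{g}(\mathtt{FULL})P^t]^{-1}$, the first term is $(1+o(1))\mathtt{SER}_r(\mathtt{FULL})$ while the second term is $o(1/P^t) = o(\mathtt{SER}_r(\mathtt{FULL}))$. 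Hence $\mathtt{SER}_r(\mathtt{Q}) \leq (1+o(1))\mathtt{SER}_r(\mathtt{FULL})$, and combined with the trivial lower bound $\mathtt{SER}_r(\mathtt{Q}) \geq \mathtt{SER}_r(\mathtt{FULL})$ (optimality of full CSIT, the analogue of Theorem \ref{theorem1}), this forces $\mathtt{SER}_r(\mathtt{Q}) \sim \mathtt{SER}_r(\mathtt{FULL})$, which gives both $\mathtt{d}(\mathtt{Q}) = \mathtt{d}(\mathtt{FULL})$ and $\mathtt{g}(\mathtt{Q}) = \mathtt{g}(\mathtt{FULL})$.

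The only mild subtlety — and the one place I would be careful — is confirming that $\delta(P) \in o(1)$ actually suffices to kill the $2t\,\delta(P)$ perturbation at the level of the array gain, not just the diversity gain; this is exactly the same reasoning used at the end of the proof of Theorem \ref{bfachieve} and in Proposition \ref{relselprop}, where a vanishing multiplicative factor $(1+2t\delta)$ leaves the limit $\lim_{P\to\infty}\mathtt{SER}_r(\mathtt{Q})P^t$ unchanged. There is no real obstacle here: every ingredient is already in place, and the proof is genuinely a one-to-one transcription of the beamforming argument with $\log P/P$ replaced by $1/P^t$. I would therefore state it briefly and, as the paper does, note that the details are omitted as being essentially identical to those of Theorem \ref{bfachieve}.
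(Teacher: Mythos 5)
Your proposal is correct and follows the exact path the paper intends: the paper omits the proof of Theorem~\ref{pcachieve} precisely because it is the same argument as Theorem~\ref{bfachieve} with (\ref{serbndforpc})--(\ref{rtbndforpc}) substituted for (\ref{serboundforqv})--(\ref{rtboundforqv}), and you have carried out that substitution faithfully, including the correct definition of $\psi$, the choice $\delta(P)=\psi^{-1}(f(P))$, and the observation that $\delta(P)\in o(1)$ makes the $(1+2t\delta(P))$ factor and the additive $\delta(P)/P^t$ term harmless. If anything, your explicit invocation of the lower bound $\mathtt{SER}_r(\mathtt{Q})\geq\mathtt{SER}_r(\mathtt{FULL})$ to pin down the limit defining $\mathtt{g}(\mathtt{Q})$ is slightly more careful than the terse treatment in the proof of Theorem~\ref{bfachieve}, but it is the same argument.
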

For both beamforming and precoding strategies, we can achieve the full-CSIT diversity and array gains with a feedback rate of $1$ bit per channel state asymptotically as $P\rightarrow\infty$. By precoding, we can approach this $1$-bit rate asymptote with the much faster $\frac{\omega(1)}{P^t}$ decay compared to the $\frac{\omega(1)\log P}{P}$ decay with beamforming. We now show that the $1+\frac{\omega(1)}{P^t}$ decay is the best possible up to $o(1)$ multipliers in the term $\frac{\omega(1)}{P^t}$.
\begin{theorem}
There is a constant $\mathtt{C}_3>0$ such that for any quantizer $\mathtt{Q}$, if $\mathtt{d}(\mathtt{Q}) = \mathtt{d}(\mathtt{FULL})$ and $\mathtt{g}(\mathtt{Q}) = \mathtt{g}(\mathtt{FULL})$, then $\mathtt{R}(\mathtt{Q}) \geq 1+\frac{\mathtt{C}_3}{P^t}$ for all sufficiently large $P$.
\end{theorem}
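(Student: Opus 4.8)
The plan is to follow the template of the proof of Theorem~\ref{bfconversetheo}: show that a feedback rate below $1+\mathtt{C}_3/P^t$ forces all but a vanishing fraction of the probability mass onto one encoding cell $\mathcal{E}_i$ served by a single precoding matrix $\mathbf{X}_i$, and then show that any \emph{fixed} precoder is bounded away from the full-CSIT array gain by a constant factor, so that such a quantizer overshoots $\mathtt{SER}_r(\mathtt{FULL})$ by a multiplicative constant and cannot satisfy $\mathtt{g}(\mathtt{Q})=\mathtt{g}(\mathtt{FULL})$.

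The technical core is a single-precoder bound: for every $\mathbf{X}\in\mathcal{X}$,
\begin{align}
\nonumber \mathtt{E}\bigl[\mathrm{Q}(\sqrt{2\|\mathbf{X}\mathbf{h}\|^2P/r})\bigr] \geq \rho(P) \triangleq \mathtt{E}\bigl[\mathrm{Q}(\sqrt{2\|\mathbf{h}\|^2P/(tr)})\bigr].
\end{align}
To see this, let $\lambda_1,\dots,\lambda_t\geq 0$ be the eigenvalues of $\mathbf{X}^\dagger\mathbf{X}$; then $\sum_j\lambda_j=\|\mathbf{X}\|^2\leq 1$, and in the eigenbasis $\|\mathbf{X}\mathbf{h}\|^2=\sum_j\lambda_j|g_j|^2$ with $g_j$ i.i.d.\ $\mathtt{CN}(1)$. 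The map $y\mapsto\mathrm{Q}(\sqrt{2yP/r})$ is convex and decreasing; rescaling $(\lambda_1,\dots,\lambda_t)$ to sum to one only decreases the argument (hence increases $\mathrm{Q}$), and the resulting probability vector majorizes $(1/t,\dots,1/t)$. Since $\mathbf{a}\mapsto\mathtt{E}[\varphi(\sum_j a_j Y_j)]$ is Schur-convex for convex $\varphi$ and i.i.d.\ nonnegative $Y_j$ (Birkhoff--von~Neumann plus Jensen), the bound follows, and it remains valid when $\mathbf{X}$ is rank-deficient. Moreover $\rho(P)$ is exactly the full-CSIT SER at power $P/t$, so since $\mathtt{SER}_r(\mathtt{FULL})$ decays as $P^{-t}$ we get $\rho(P)/\mathtt{SER}_r(\mathtt{FULL})\rightarrow t^t>1$.

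Now fix any $\mathtt{C}_3\in\bigl(0,\,2(t^t-1)/\mathtt{g}(\mathtt{FULL})\bigr)$. For $|\mathcal{I}|=1$ the single-precoder bound already gives $\mathtt{SER}_r(\mathtt{Q})\geq\rho(P)\sim t^t\,\mathtt{SER}_r(\mathtt{FULL})$, so $\mathtt{g}(\mathtt{Q})\leq\mathtt{g}(\mathtt{FULL})/t^t<\mathtt{g}(\mathtt{FULL})$; for $|\mathcal{I}|=2$, replacing the cells by the SER-optimal partition only lowers the bound, so $\mathtt{SER}_r(\mathtt{Q})\geq\mathtt{E}[\mathrm{Q}(\sqrt{2\max(\mathbf{h}^\dagger\mathbf{A}\mathbf{h},\mathbf{h}^\dagger\mathbf{B}\mathbf{h})P/r})]$ with $\mathbf{A}=\mathbf{X}_0^\dagger\mathbf{X}_0$, $\mathbf{B}=\mathbf{X}_1^\dagger\mathbf{X}_1$ of trace $\leq 1$, and since $\{\mathbf{h}^\dagger\mathbf{A}\mathbf{h}\leq 1\}\cap\{\mathbf{h}^\dagger\mathbf{B}\mathbf{h}\leq 1\}\supseteq\{\mathbf{h}^\dagger(\mathbf{A}+\mathbf{B})\mathbf{h}\leq 1\}$ and $\det(\mathbf{A}+\mathbf{B})\leq(\mathrm{tr}(\mathbf{A}+\mathbf{B})/t)^t\leq(2/t)^t$, the left tail of the effective channel is heavier than that of $\|\mathbf{h}\|^2$ by a factor at least $(t/2)^t$, so $\mathtt{g}(\mathtt{Q})<\mathtt{g}(\mathtt{FULL})$ for $t\geq 3$ (a sharper volume estimate is needed for $t=2$). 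Thus every quantizer with $|\mathcal{I}|\leq 2$ already fails the hypothesis, irrespective of its rate. For $|\mathcal{I}|\geq 3$, suppose for contradiction that $\mathtt{R}(\mathtt{Q})<1+\mathtt{C}_3/P^t$ along a sequence $P_k\rightarrow\infty$. The same argument as in Theorem~\ref{bfconversetheo} produces an index $i$ with $\mathtt{L}(\mathtt{b}_j)\geq 2$ for all $j\neq i$ and $\mathtt{P}(\mathbf{h}\in\mathcal{E}_i)>1-\mathtt{C}_3/P_k^t$, whence, using $\mathrm{Q}\leq 1/2$ and the single-precoder bound,
\begin{align}
\nonumber \mathtt{SER}_r(\mathtt{Q}) \geq \int_{\mathcal{E}_i}\mathrm{Q}(\sqrt{2\|\mathbf{X}_i\mathbf{h}\|^2P/r})f(\mathbf{h})\mathrm{d}\mathbf{h} \geq \rho(P_k)-\frac{\mathtt{C}_3}{2P_k^t}.
\end{align}
Dividing by $\mathtt{SER}_r(\mathtt{FULL})$ and letting $k\rightarrow\infty$ gives $\liminf_k\mathtt{SER}_r(\mathtt{Q})/\mathtt{SER}_r(\mathtt{FULL})\geq t^t-\mathtt{C}_3\,\mathtt{g}(\mathtt{FULL})/2>1$, contradicting $\mathtt{d}(\mathtt{Q})=\mathtt{d}(\mathtt{FULL})$, $\mathtt{g}(\mathtt{Q})=\mathtt{g}(\mathtt{FULL})$, which force that ratio to tend to $1$.

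The main obstacle is the single-precoder bound and its \emph{uniformity} in $\mathbf{X}$: the Schur-convexity step must be carried out for genuinely i.i.d.\ exponential summands and must degrade gracefully when $\mathbf{X}^\dagger\mathbf{X}$ is singular (there the diversity order itself drops, so $\rho(P)$ is still a --- now loose --- lower bound). A secondary nuisance is the $|\mathcal{I}|=2$, $t=2$ subcase, where $\det(\mathbf{A}+\mathbf{B})\leq(2/t)^t=1$ is not strict enough; there one either invokes the explicit containment of the unit square $[0,1]^2$ inside $\{\mathbf{h}^\dagger\mathbf{A}\mathbf{h}\leq 1\}\cap\{\mathbf{h}^\dagger\mathbf{B}\mathbf{h}\leq 1\}$ (in coordinates diagonalizing both forms) or a compactness argument to upgrade the strict pointwise inequality $\det(\mathbf{A}+\mathbf{B})<1$ to a uniform gap.
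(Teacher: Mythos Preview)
Your approach is essentially the paper's: force a single dominant cell when $\mathtt{R}(\mathtt{Q})<1+\mathtt{C}_3/P^t$, lower-bound the SER on that cell by the open-loop SER minus the mass of the complement, and use the strict array-gain gap between open-loop and full-CSIT to reach a contradiction. The paper writes the dominant-cell bound as $\mathtt{SER}_r(\mathtt{Q})\geq\mathtt{SER}_r(\mathtt{OPEN})-R$ with $\mathtt{SER}_r(\mathtt{OPEN})\triangleq\inf_{\mathbf{X}\in\mathcal{X}}\mathtt{E}[\mathrm{Q}(\sqrt{2\|\mathbf{X}\mathbf{h}\|^2P/r})]$ and then simply \emph{asserts} that $\mathtt{d}(\mathtt{OPEN})=t$ while $\mathtt{g}(\mathtt{OPEN})<\mathtt{g}(\mathtt{FULL})$, picking $\mathtt{C}_3=\tfrac12\bigl(1/\mathtt{g}(\mathtt{OPEN})-1/\mathtt{g}(\mathtt{FULL})\bigr)$.

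Where you differ is that you actually \emph{prove} the assertion the paper leaves implicit: your Schur-convexity step (symmetric $+$ convex in the eigenvalue vector, since $y\mapsto\mathrm{Q}(\sqrt{2yP/r})$ is convex and the $|g_j|^2$ are i.i.d.) shows the open-loop optimum is $\mathbf{X}=\mathbf{I}/\sqrt{t}$, hence $\mathtt{SER}_r(\mathtt{OPEN})=\rho(P)$ and $\mathtt{g}(\mathtt{OPEN})=\mathtt{g}(\mathtt{FULL})/t^t$. This is a genuine strengthening and yields the same constant range $\mathtt{C}_3<2(t^t-1)/\mathtt{g}(\mathtt{FULL})$ that the paper's choice sits inside.

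Your $|\mathcal{I}|=2$ digression, however, is extra work that you do not need and do not finish (as you note for $t=2$). The paper disposes of $|\mathcal{I}|\leq 2$ in one line by pointing back to the beamforming converse, which in turn rests on the standard fact that no finite-cardinality codebook attains the full-CSIT array gain. You can invoke the same fact here: a two-matrix precoding codebook is a rate-$1$ FLQ and hence has $\mathtt{g}(\mathtt{Q})<\mathtt{g}(\mathtt{FULL})$ by the known finite-rate impossibility results, so the hypothesis already fails without any determinant or volume computation. Dropping that detour, your argument is correct and slightly more self-contained than the paper's.
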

\begin{proof}
Let $R = \frac{\mathtt{C}_3}{P^t}$ with the constant $\mathtt{C}_3$ to be specified later on. Also, let $\mathtt{Q} = \{\mathbf{X}_n,\mathcal{E}_n,\mathtt{b}_n\}_{\mathcal{I}}$ with $|\mathcal{I}|\geq 3$, and suppose that for every $P_0\in\mathbb{R}$, there exists $P \geq P_0$ such that $\mathtt{R}(\mathtt{Q}) < 1+R$. Using the same arguments as in the proof of Theorem \ref{bfconversetheo}, it is sufficient to show that $\mathtt{Q}$ will not be able to achieve the full-CSIT diversity and array gains. 

The properties of $\mathtt{Q}$ imply the existence of some $i\in\mathcal{I}$ with $\mathtt{P}(\mathbf{h}\in\mathcal{E}_i) \geq 1-R$ (see the proof of Theorem \ref{bfconversetheo}). With $f(\mathbf{h})$ denoting the probability density function of $\mathbf{h}$, we then have
\begin{align}
 \mathtt{SER}_r(\mathtt{Q}) &   = \sum_{n\in\mathcal{I}} \int_{\mathcal{E}_n}\mathrm{Q}(\sqrt{2\| \mathbf{X}_n\mathbf{h}\|^2P/r})f(\mathbf{h})\mathrm{d}\mathbf{h} \\
  &  \geq \int_{\mathcal{E}_i}\mathrm{Q}(\sqrt{2\|\mathbf{X}_i\mathbf{h}\|^2P/r})f(\mathbf{h})\mathrm{d}\mathbf{h}  \\
 &   = \int_{\mathbb{C}^T}\mathrm{Q}(\sqrt{2\|\mathbf{X}_i\mathbf{h}\|^2P/r})f(\mathbf{h})\mathrm{d}\mathbf{h} - \\ & \qquad  \int_{\mathcal{E}_1^c}\underbrace{\mathrm{Q}(\sqrt{2\|\mathbf{X}_i\mathbf{h}\|^2P/r})}_{\leq 1}f(\mathbf{h})\mathrm{d}\mathbf{h} \\
&  \geq \underbrace{\inf_{\mathbf{X}_i\in\mathcal{X}}\int_{\mathbb{C}^T}\mathrm{Q}(\sqrt{2\|\mathbf{X}_i\mathbf{h}\|^2P/r})f(\mathbf{h})\mathrm{d}\mathbf{h}}_{\triangleq \mathtt{SER}_r(\mathtt{OPEN})} - \\
& \qquad \int_{\mathcal{E}_1^c}f(\mathbf{h})\mathrm{d}\mathbf{h} \\
\label{lajldjasdas} &   \geq \mathtt{SER}_r(\mathtt{OPEN}) - R
\end{align}
where $\mathtt{SER}_r(\mathtt{OPEN})$ represents the open-loop SER with no feedback. Note that $\mathtt{d}(\mathtt{OPEN}) = t$ but $\mathtt{g}(\mathtt{OPEN}) < \mathtt{g}(\mathtt{FULL})$. We thus choose the constant in the statement of the theorem as $\mathtt{C}_3 = \frac{1}{2}(1/\mathtt{g}(\mathtt{OPEN}) - 1/\mathtt{g}(\mathtt{FULL}))$. Noting that $\mathtt{SER}_r(\mathtt{OPEN}) \sim 1/\mathtt{g}(\mathtt{OPEN})P^{-t}$ and $\mathtt{R}(\mathtt{q}) \leq \mathtt{\mathtt{C}_3}P^{-t}$, the lower bound in (\ref{lajldjasdas}) then becomes $\sim \frac{1}{2}(1/\mathtt{g}(\mathtt{OPEN}) + 1/\mathtt{g}(\mathtt{FULL}))P^{-t} > 1/\mathtt{g}(\mathtt{FULL})P^{-t}$. Hence, either $\mathtt{d}(\mathtt{Q}) < \mathtt{d}(\mathtt{FULL})  = t$, or if $\mathtt{d}(\mathtt{Q}) = t$, we have $\mathtt{g}(\mathtt{Q}) < \mathtt{g}(\mathtt{FULL})$. This concludes the proof.
\end{proof}
We can therefore conclude that for precoding, the necessary and sufficient feedback rate that guarantees the full-CSIT gains is $1+\frac{\omega(1)}{P^t}$, up to $o(1)$ multipliers in the term $\frac{\omega(1)}{P^t}$.

\section*{Acknowledgement}
This work was supported in part by the NSF Award CCF-1218771.


\begin{thebibliography}{1}


\bibitem{love3}
D. J. Love, R. W. Heath, Jr., V. K. N. Lau, D. Gesbert, B. D. Rao, and M. Andrews, ``An overview of limited feedback in wireless communication systems,'' \emph{IEEE J. Select. Areas Commun.}, vol. 26, no. 8, pp. 1341--1365, Oct. 2008.


\bibitem{love1}
D. J. Love, R. W. Heath, Jr., and T. Strohmer, ``Grassmannian beamforming for multiple-input multiple-output wireless systems,'' \emph{IEEE Trans. Inf. Theory}, vol. 49, no. 10, pp. 2735--2747, Oct. 2003.

\bibitem{zhou2}
S. Zhou, Z. Wang, and G. B. Giannakis, ``Quantifying the power loss
when transmit beamforming relies on finite-rate feedback,'' \emph{IEEE
Trans. Wireless Commun.}, vol. 4, no. 4, pp. 1948--1957, Jul. 2005.

\bibitem{raghavan1}
V. Raghavan, R. W. Heath, Jr., and A. M. Sayeed, ``Systematic codebook
designs for quantized beamforming in correlated MIMO channels,''
\emph{IEEE J. Select. Areas Commun.}, vol. 25, no. 7, pp. 1298--1310,
Sept. 2006.

\bibitem{narula1}
A. Narula, M. J. Lopez, M. D. Trott, and G. W. Wornell, ``Efficient
use of side information in multiple antenna data transmission over
fading channels,'' \emph{IEEE J. Select. Areas Commun.}, vol. 16, no. 8, pp.
1423--1436, Oct. 1998.

\bibitem{roh1}
J. C. Roh and B. D. Rao, ``Transmit beamforming in multiple antenna systems with finite rate feedback: A VQ-based approach,'' \emph{IEEE Trans. Inf. Theory}, vol. 52, no. 3, pp. 1101--1112, Mar. 2006.

\bibitem{roh2}
J. C. Roh and B. D. Rao, ``Design and analysis of MIMO spatial multiplexing systems with quantized feedback'', \emph{IEEE Trans. Signal Process.}, vol. 54, no. 8, pp. 2874--2886, Aug. 2006.


\bibitem{lau1}
V. K. N. Lau, Y. Liu, and T.-A. Chen, ``On the design of MIMO block-fading
channels with feedback-link capacity constraint,'' \emph{IEEE Trans.
Commun.}, vol. 52, no. 1, pp. 62--70, Jan. 2004.

\bibitem{zheng1}
J. Zheng, E. R. Duni, and B. D. Rao, ``Analysis of multiple-antenna systems with finite-rate feedback using high-resolution quantization theory,'' \emph{IEEE Trans. Signal Process.}, vol. 55, no. 4, pp. 1461--1476, Apr. 2007.

\bibitem{yeung1}
C. K. Au-Yeung and D. J. Love, ``On the performance of random
vector quantization limited feedback beamforming in a MISO system,'' \emph{IEEE Trans.
Wireless Commun.}, vol. 6, no. 2, pp. 458--462, Feb. 2007.

\bibitem{santipach2}
W. Santipach and K. Mamat, ``Tree-Structured Random Vector Quantization for Limited-Feedback
Wireless Channels,'' \emph{IEEE Trans. Wireless Commun.}, vol. 10, no. 9, pp. 3012--3019, Sept. 2011. 


\bibitem{mukkavilli1}
K. K. Mukkavilli, A. Sabharwal, and E. Erkip, ``On beamforming with finite-rate feedback for multiple antenna systems,'' \emph{IEEE Trans. Inf. Theory}, vol. 49, no. 10, pp. 2562--2579, Oct. 2003.

\bibitem{xia1}
P. Xia and G. B. Giannakis, ``Design and analysis of transmit-beamforming
based on limited-rate feedback,'' \emph{IEEE Trans. Signal Process.},
vol. 54, no. 5, pp. 1853--1863, May 2006.

\bibitem{jafar1}
S. A. Jafar and S. Srinivasa, ``On the optimality of beamforming with quantized feedback,'' \emph{IEEE Trans. Commun.}, nol. 55, no. 12, pp. 2288--2302, Dec. 2007.


\bibitem{jongren1}
G. J\"{o}ngren, M. Skoglund, and B. Ottersten, ``Combining beamforming
and orthogonal space-time block coding,'' \emph{IEEE Trans. Inf. Theory},
vol. 48, no. 3, pp. 611--627, Mar. 2002.


\bibitem{liu1}
L. Liu and H. Jafarkhani, ``Application of quasi-orthogonal space-time block codes in beamforming,'' \emph{IEEE Trans. Signal Process.}, vol. 53, no. 1, pp. 54--63, Jan. 2005.

\bibitem{love4}
D. J. Love and R. W. Heath, Jr., ``Limited feedback unitary precoding for orthogonal space-time block codes,'' \emph{IEEE Trans. Signal Process.}, vol. 53, no. 1, pp. 64--73, Jan. 2005.

\bibitem{love5}
------, ``Limited feedback unitary precoding
for spatial multiplexing,'' \emph{IEEE Trans. Inf. Theory}, vol. 51, no. 8, pp.
2967--2976, Aug. 2005.

\bibitem{zhou1}
S. Zhou and B. Li, ``BER criterion and codebook construction for finite
rate precoded spatial multiplexing with linear receivers,'' \emph{IEEE Trans.
Signal Process.}, vol. 54, no. 5, pp. 1653--1665, May 2006.

\bibitem{ekbatani1}
S. Ekbatani and H. Jafarkhani, ``Combining beamforming and space-time
coding using quantized feedback,'' \emph{IEEE Trans. Wireless Commun.},
vol. 7, no. 3, pp. 898--908, Mar. 2008.

\bibitem{tjc}
V. Tarokh, H. Jafarkhani, and A. R. Calderbank, ``Space-time block codes from orthogonal designs,'' \emph{IEEE Trans. Inf. Theory}, vol. 54, no. 5, pp. 1456--1467, July 1999.


%
%
%
%
%
%
%


%
%
%

%



%
%
%


\bibitem{dccfull}
E. Koyuncu and H. Jafarkhani, ``Very low-rate variable-length channel quantization for minimum outage probability,'' \emph{IEEE Data Compression Conf.}, Mar. 2013. [Online]. Available: http://arxiv.org/abs/1210.8441

\bibitem{relayinterf}
------, ``Distributed beamforming in wireless multiuser relay-interference networks with quantized feedback,'' \emph{IEEE Trans. Inf. Theory}, vol. 58, no. 7, pp. 4538--4576, July 2012.

\bibitem{theotherdcc}
------, ``A systematic distributed quantizer design method with an application to MIMO broadcast channels,'' \emph{IEEE Data Compression Conf.}, Mar. 2010.



\bibitem{cote1}
F. D. C\^{o}t\'e, I. N. Psaromiligkos, and W. J. Gross, ``A Chernoff-type lower bound for the Gaussian Q-function,'' Mar. 2012. [Online]. Available: http://arxiv.org/pdf/1202.6483

\bibitem{pkcwang1}
P. K. C. Wang, ``On a class of optimization problems involving domain variations,'' \emph{Intl. Symp. on New Trends in System Analysis}, Dec. 1976.



%

\end{thebibliography}
\end{document}